\setlist[enumerate]{itemsep=0.5ex,leftmargin=4ex}
\newcommand{\revised}[1]{{\color{blue}#1}}
\renewcommand{\revised}[1]{#1}
\numberwithin{equation}{section}
\theoremstyle{plain}                
\newtheorem{theorem}{Theorem}[section]
\newtheorem{lemma}[theorem]{Lemma}
\newtheorem{proposition}[theorem]{Proposition}
\theoremstyle{definition}           
\newtheorem{example}[theorem]{Example}
\theoremstyle{remark}
\newtheorem{remark}[theorem]{Remark}
\newcommand{\tot}{\tfrac{1}{2}} 
\newcommand{\oo}[1]{\tfrac{1}{#1}}
\newcommand{\scl}[2]{\langle #1,#2 \rangle} 
\newcommand{\abs}[1]{\left| #1 \right|} 
\newcommand{\ab}[1]{\langle #1 \rangle} 
\newcommand{\set}[1]{\{#1\}} 
\newcommand{\Bset}[1]{\Big\{#1\Big\}} 
\newcommand{\Bsets}[2]{\Bset{#1\,:\,#2}} 
\newcommand{\prf}[1]{ \{ #1 \}_{t\in [0,T]}}
\newcommand{\dd}{d}
\newcommand{\rn}[2]{\frac{\dd #1}{\dd #2}}
\newcommand{\trn}[2]{\tfrac{\dd #1}{\dd #2}}
\newcommand{\downto}{\searrow}
\providecommand{\R}{} \renewcommand{\R}{{\mathbb R}}
\newcommand{\PP}{{\mathbb P}}
\newcommand{\EE}{{\mathbb E}}
\newcommand{\ee}[1]{ \bE \left[ #1 \right] }
\newcommand{\FFF}{{\mathbb F}}
\newcommand{\EN}{{\mathcal E}}
\newcommand{\eps}{\varepsilon}
\newcommand{\ld}{\lambda}
\newcommand{\Ld}{\Lambda}
\newcommand{\el}{{\mathbb L}} 
\newcommand{\lone}{\el^1}
\newcommand{\efor}{\text{ for }}
\newcommand{\eand}{\text{ and }}
\newcommand{\ewhere}{\text{ where }}
\newcommand\bE{{\mathbb E}}
\newcommand\sF{{\mathcal F}}
\newcommand\sH{{\mathcal H}}
\newcommand\tM{{\tilde{M}}}
\newcommand\sP{{\mathcal P}}
\newcommand\sX{{\mathcal X}}
\newcommand\tX{{\tilde{X}}}
\newcommand\sY{{\mathcal Y}}
\newcommand{\uppar}[2]{#1^{(#2)}}
\newcommand{\upeps}[1]{\uppar{#1}{\eps}}
\newcommand{\upz}[1]{\uppar{#1}{0}}
\newcommand{\updel}[1]{\uppar{#1}{\delta}}
\newcommand{\Se}{\upeps{S}}
\renewcommand{\Re}{\upeps{R}}
\newcommand{\sXe}{\upeps{\sX}}
\newcommand{\ue}{\upeps{u}}
\newcommand{\Ke}{\upeps{K}}
\newcommand{\ve}{\upeps{v}}
\newcommand{\hXe}{\upeps{\hat{X}}}
\newcommand{\hYe}{\upeps{\hat{Y}}}
\newcommand{\sYe}{\upeps{\sY}}
\newcommand{\lde}{\upeps{\ld}}
\newcommand{\Ze}{\upeps{Z}}
\newcommand{\hHe}{\upeps{\hat{H}}}
\newcommand{\ye}{\upeps{y}}
\newcommand{\Rz}{\upz{R}}
\newcommand{\uz}{\upz{u}}
\newcommand{\Kz}{\upz{K}}
\newcommand{\vz}{\upz{v}}
\newcommand{\hXz}{\upz{\hat{X}}}
\newcommand{\hYz}{\upz{\hat{Y}}}
\newcommand{\Zz}{\upz{Z}}
\newcommand{\hHz}{\upz{\hat{H}}}
\newcommand{\hpiz}{\upz{\hat{\pi}}}
\newcommand{\PPz}{\upz{\tilde{\PP}}}
\newcommand{\Dz}{\upz{\Delta}}
\newcommand{\RRz}{\tilde{\mathbb P}^{(0)}}
\newcommand{\Zd}{\updel{Z}}
\newcommand{\Zb}{\uppar{Z}{\beta}}
\newcommand{\derep}[1]{\trn{}{\eps} #1\Big|_{\eps=0+} }
\newcommand{\sPm}{\sP_{M}}
\newcommand{\tMp}{\tM^p}
\newcommand{\uel}{\ue_{\log}}
\newcommand{\uzl}{\uz_{\log}}
\newcommand{\vel}{\ve_{\log}}
\newcommand{\ce}{\upeps{\mathrm{CE}}}
\newcommand{\ldk}{\ld^\text{KO}}
\begin{document}

\title{An expansion in the model space in the context of utility maximization}

\author{Kasper Larsen}
\address{Kasper Larsen, Department of Mathematical Sciences, Carnegie
Mellon University}
\email{kasperl@andrew.cmu.edu}
\thanks{
  The authors would like to thank Milica \v Cudina, Claus Munk, Mihai S\^\i rbu and Kim Weston for discussions.  During the preparation of
  this work the first author has been supported by the National Science
  Foundation under Grant No.~DMS-1411809 (2014 - 2017).  The second author has been supported by the National Science Foundation under grant No.~DMS-1600307 (2015 - 2018).  The third author has been supported by the NSF under Grants  No.~DMS-0706947 (2010 - 2015) and No.~DMS-1107465 (2012 - 2017).  Any opinions, findings and conclusions or recommendations
  expressed in this material are those of the author(s) and do not
  necessarily reflect the views of the National Science Foundation (NSF)}
 
\author{Oleksii Mostovyi}
\address{Oleksii Mostovyi, Department of Mathematics,  
University of Connecticut}
\email{oleksii.mostovyi@uconn.edu}


\author{Gordan \v{Z}itkovi\'{c}}
\address{Gordan \v Zitkovi\' c, Department of Mathematics,  
University of Texas at Austin}
\email{gordanz@math.utexas.edu}

\subjclass[2010]{Primary 91G10, 91G80; Secondary 60K35.
\\\indent\emph{Journal of Economic Literature (JEL) Classification:} C61, G11}

\keywords{Continuous semimartingales, 2nd order expansion, incomplete markets, power utility, convex duality, optimal
 investment.}
  
\begin{abstract}  

In the framework of an incomplete financial market where the stock price
dynamics are modeled by a continuous semimartingale (not
necessarily Markovian) an explicit
second-order expansion formula for the power investor's value function -
seen as a function of the underlying market price of risk process - is
provided.  This allows us to provide first-order approximations of the
optimal primal and dual controls.  Two specific calibrated numerical
examples illustrating the accuracy of the method are also given.

\end{abstract}

\maketitle

\section{Introduction} 
\label{sec:intro}

In an incomplete financial setting with noise governed by a continuous
martingale and in which the investor's preferences are modeled by a
negative power utility function, we provide a  second-order Taylor expansion
of the investor's value function with respect to perturbations of the
underlying market price of risk process. We show that tractable models can
be used to approximate highly intractable ones as long as the latter can
be interpreted as perturbations of the former.  As a by-product of our  analysis we explicitly construct first-order approximations of both the primal and the dual optimizers. Finally, we apply our approximation in two numerical
examples.

There are two different ways of looking at our contribution: as a tool to 
approximate the value function and 
perform numerical computations, or as a stability result with applications to statistical estimation. Let us elaborate on these, and the
related work,  in order.

\par{\em An approximation interpretation.} The conditions for existence and uniqueness of the investor's utility optimizers are
well-established (see \cite{KLSX} and  \cite{KraSch99}). However, in
general settings, the numerical computation of the investor's value
function remains a challenging problem. Various existing approaches
include:

\begin{enumerate}

\item In Markovian settings, the value function can typically be
characterized by a HJB-equation.  Its numerical implementation
through a finite-grid approximation is naturally subject to the curse of
dimensionality. Many authors (see \cite{KimOmb96}, \cite{Wac2002}, \cite{ChaVic2005},
\cite{Kra05}, and \cite{Liu2007}) opt for affine and 
quadratic models for
which closed-form solutions exist. Going beyond these specifications
in high-dimensional settings by using PDE-techniques seems to be
very hard computationally.

\item In general (i.e., not necessarily Markovian) complete models,
\cite{CviGouZap2003} and \cite{DetGarRin2003} provide efficient Monte Carlo
simulation techniques based on the martingale method for complete markets
developed in \cite{CoxHuang1} and \cite{KLS}.

\item Other approximation methods are based on various Taylor-type
expansions. The authors of 
\cite{Cam1993} and \cite{CamVic1999} log-linearize the investor's budget constraint as well as the investor's first-order condition for optimality.  \cite{KogUpp2000}  expand
in the investor's risk-aversion coefficient around the log-investor (the myopic investor's problem is known to be tractable even in incomplete settings). 
When solving the HJB-equation numerically (using Longstaff-Schwartz type of techniques)
\cite{BraGoySanStr2005} expand the value function in the wealth variable to
a forth degree Taylor approximation.

\item Based on the duality results in \cite{KLSX}, \cite{HauKogWan2006} provide an upper bound on the error stemming from using sub-optimal strategies. \cite{BicKraMun2013}
propose a method based on minimizing over a subset of dual
elements. This subset is chosen such that the corresponding dual
utility can be computed explicitly and transformed into a feasible primal strategy.

\item \revised{It is also important to mention the recent explosion in
research in asymptotic methods in a variety of different ares in
mathematical finance (transaction costs, pricing, etc.). Since we focus on
model expansion in utility maximization in this paper, we simply point the
reader to some of the most recent papers, namely \cite{AltMuhSon15}, and
\cite{KalMuh15}, and the references therein, for further information.}
\end{enumerate}

In our work, no Markovian assumption is imposed and we deal with
general, possibly incomplete, markets with continuous price
processes. We note that while our results apply only to $p<0$, it is possible to
extend them to $p\in(0,1)$ at the cost of imposing additional integrability
requirements. We do not pursue such an extension; the parameter range
$p\in (0,1)$ which we leave out seems to lie outside the typical
range of risk-aversion parameters observed in practice (see, e.g.,  \cite{Szp86}).
Moreover, we do not consider utility functions more general than the powers. While there are no significant additional mathematical difficulties in treating the general case under appropriate conditions on the relative risk-aversion coefficients, we do not believe that the added value justifies the corresponding 
notational and technical overhead. For example, all our results would become dependent on the agent's initial wealth, and this dependence would permeate the entire analysis.

\par{\em A stability interpretation.} As we mentioned above, our contribution can also be seen as a stability
result.  It is well-known (see, e.g., \cite{Rogers}) that even in
Samuelson's model, estimating the drift is far more challenging than
estimating the volatility. \cite{LarZit07} identify the kinds of
perturbations of the market price of risk process under which the value
function behaves continuously. In the present paper we take the stability
analysis one step further and provide a first-order Taylor expansion in 
an 
infinite-dimensional space of the market price of risk processes. This way,
we not only identify the ``continuous'' directions, but also identify
those features of the market price of risk process that affect the
solution of the utility maximization problem the most (at least locally).  
Any statistical procedure which is performed with utility maximization in
mind should, therefore, focus on those, salient, features in order to
use the scarce data most efficiently. 

Similar perturbations have been considered by \cite{Monoyios},
but in a somewhat different setting. \cite{Monoyios} is based on Malliavin
calculus and produces a first-order expansion for the
utility-indifference price of an exponential investor in an It\^o-process
driven market; some of the ideas used can be traced to the related
work  \cite{Dav06}.

\par{\em Mathematical challenges.} From a mathematical point of view,
our approach is founded on two ideas.  One of them is to extend the
techniques and results of \cite{LarZit07}; indeed, the basic fact that the
optimal dual minimizers converge when the market prices of risk process does is
heavily exploited. It does not, however, suffice to get the full picture.
For that, one needs to work on the primal and the dual problems
simultaneously and use a pair of bounds. The ideas used there are related
to and can be interpreted as a nonlinear version of the primal-dual
second-order error estimation techniques first used in \cite{Hen2002} in
the context of mathematical finance.  The first-order expansion in the
quantity of the unspanned contingent claim developed in \cite{Hen2002} was 
generalized in 
\cite{KramkovSirbu2006b}
(see also \cite{KramkovSirbu2006a}). The arguments in these papers rely on convexity and
concavity properties in the expansion parameter (wealth and number of
unspanned claims). This is not the case in the present paper; indeed, when seen as a function of the underlying
market price of risk process, the investor's value function is neither
convex nor concave and a more delicate, local, analysis needs to be
performed.

\bigskip

\par{\em Numerical examples.} 
In Section \ref{sec:examples} we use two examples to
illustrate how our approximation performs under realistic conditions.
First, we consider the  Kim-Omberg model (see \cite{KimOmb96}) which is widely
used in the financial literature. Under a calibrated set of parameters, 
we find that our approximation is indeed very accurate when compared to the exact values. 

Our second example belongs to a class of extended
affine models introduced in \cite{CheFilKim07}. The authors  
show that this class of models has superior empirical properties when 
compared to popular affine and quadratic specifications 
(such as those used, e.g.,  in \cite{Liu2007}). 
The resulting optimal investment problem for the extended affine models,
unfortunately, does not seem to be explicitly solvable. Our approximation
technique turns out to be easily applicable and our error bounds are quite
tight in the relevant parameter ranges.

\section{A family of utility-maximization problems}
\label{sec:problem}

\subsection{The setup}
\label{sse:setup}
We work on a filtered probability space $(\Omega,\sF,
\FFF= \prf{\sF_t}, \PP)$, 
with the
finite time horizon $T>0$. We assume that the filtration $\FFF$ is
right-continuous and that the $\sigma$-algebra $\sF_0$ consists of all
$\PP$-trivial subsets of $\sF$. 

Let $M$ be a continuous local martingale, and let $\Re$, $\eps\geq
0$
be a family of continuous $\FFF$-semimartingales given by
\begin{equation}
\label{equ:7283}
\begin{split}
 \Re := M + \int_0^{\cdot} \lde_t\, d\ab{M}_t, \text{ on } 
 [0,T],\ewhere \lde: = \ld  + \eps \ld',
\end{split}
\end{equation}
for a pair $\ld,\ld'\in\sPm^2$, where $\sPm^2$ denotes the collection of 
  all progressively
measurable processes $\pi$ with $\int_0^T \pi_t^2\,
d\ab{M}_t<\infty$.
As $\Se := \EN(\Re)$ (where $\EN$ denotes the stochastic exponential)
will be interpreted as the price process of a
financial asset, the assumption that $\lde\in\sPm^2$ can be taken 
as a minimal no-arbitrage-type condition.
We remark right away that further integrability conditions on $\ld$ and $\ld'$ will need to be
imposed below for our main results to hold.
\subsection{The utility-maximization problem}
Given $x>0$ and $\eps \in [0,\infty)$, let $\sXe(x)$ denote the set of all
nonnegative 
wealth processes starting from initial wealth $x$ in the financial market consisting of 
$\Se := \EN(\Re)$ and a zero-interest bond, i.e., 
\[ \sXe(x) := \Bsets{x \EN\big(\textstyle \int_0^T \pi_t\, d\Re_t\big)}{  \pi \in\ \sPm^2}.\]
Here, $\pi$ is interpreted as the fraction of wealth invested in the risky asset $\Se$.
The investor's preferences are modeled by a CRRA (power) utility function
with the risk-aversion parameter $p<0$:
\begin{equation}\label{equ:U}
U(x) := \frac{x^p}{p}, \quad x>0.
\end{equation}
The value function of the corresponding optimal-investment problem is defined
by
\begin{equation}\label{equ:ue}
\ue(x) := \sup_{X\in\sXe(x)} \EE[ U(X_T)],\ x>0.
\end{equation}
\subsection{The dual utility-maximization problem}
As is usual in the utility-ma\-xi\-mi\-za\-ti\-on literature, a fuller
picture is obtained if one also considers the appropriate version of the
optimization problem dual to \eqref{equ:ue}. For that, we need to examine
the no-arbitrage properties of the set of models introduced in Section
\ref{sse:setup} above.

We observe, first, that the assumptions we placed on the
market price of risk processes $\lde$ above are  not sufficient to
guarantee the existence of an equivalent martingale measure (NFLVR). They
do preclude so-called ``arbitrages of the first kind'' and imply the
related condition NUBPR.  In particular, for all $x,y>0$ and $\eps\geq 0$
there exists a (strictly) positive c\` adl\` ag supermartingale $Y$ with
the property that $Y_0=y$ and $YX$ is a supermartingale for each $X\in\sXe(x)$; we denote
the set of all such processes by $\sYe(y)$. While this is a consequence of the
condition NUBPR in general, in this case an example of a process in
$\sYe(y)$ is given, explicitly, as $y\Ze$, where $\Ze$ is the \emph{minimal} local martingale density:
 \begin{equation}
 \label{equ:Ze}
 \begin{split}
   \Ze = \EN(-\int_0^{\cdot} \lde_t\, dM_t). 
 \end{split}
 \end{equation}

Having described the dual domain, we remind the reader that the
\emph{conjugate} utility function $V:(0,\infty)\to\R$ is defined by
\begin{equation}\label{equ:V}
V(y) := \sup_{x>0} \left( U(x) - xy\right) = \frac{y^{-q}}{q},\ewhere
q:=\tfrac{p}{1-p}\in(-1,0).
\end{equation}
We define the \emph{dual value function} $\ve:(0,\infty)\to \R$ by
\begin{equation}
\label{equ:ve}
\begin{split}
 \ve(y) := \inf_{Y\in \sYe(y) } \EE[ V(Y_T)],\ y>0,\ \eps\geq 0.
\end{split}
\end{equation}
Due to negativity (and, a fortiori, finiteness) of the primal value
function $\ue$, the (abstract) Theorem 3.1 of \cite{KraSch99} can now be applied
(see also \cite{Mostovyi2011}). Its 
main assumption, namely the
bipolar relationship between the primal and dual domains, holds due to the
existence of the num\' eraire process, given explicitly by $1/\Ze$ (see
Theorem 4.12~in \cite{KarKar07}). One can also use 
a simpler argument (see \cite{Lar11}), which applies only to 
the case of a CRRA utility with $p<0$, to obtain the following conclusions
for all $\eps\geq 0$:
\begin{enumerate}
  \item  
both $\ue$ and $\ve$ are
finite 
and the following conjugacy
relationships hold 
 \begin{equation}
 \label{equ:conj}
 \begin{split}
   \ve(y) = \sup_{x>0} \Big( \ue(x)-xy \Big),\eand
   \ue(x) = \inf_{y>0} \Big( \ve(y)+xy \Big).
 \end{split}
 \end{equation}
\item For all $x,y>0$
there exist optimal solutions $\hXe(x)\in\sXe(x)$
and $\hYe(y)\in\sYe(y)$ of \eqref{equ:ue} and \eqref{equ:ve},
respectively,  and are related by
\[ U'(\hXe_T(x)) = \hYe_T (\ye(x))  \ewhere \ye(x) = \trn{}{x} \ue(x) = p
x^{p-1} \ue(1). \]
\item \revised{The product $\hXe\hYe$ is a uniformly-integrable martingale. 
In particular \[ \EE[ \hXe_T \hYe_T]=xy.\]} 
\end{enumerate}
The homogeneity of the utility function $U$ and its conjugate
$V$ transfers to the value functions $\ue$ and $\ve$ and the optimal
solutions $\hXe$ and $\hYe$:
 \begin{equation}
 \label{equ:simpl}
 \begin{split}
    \ue(x) &= x^p \ue,\quad \ve(y)=y^{-q} \ve,\\
    \hXe(x)&=x \hXe,\quad \hYe(y)=y \hYe,
 \end{split}
 \end{equation}
where, to simplify the notation, we write $\ue, \ve, \hXe$ and $\hYe$ for
$\ue(1)$,
$\ve(1)$, $\hXe(1)$ and $\hYe(1)$, respectively. 
\subsection{A change of measure}
For $\eps=0$ we denote by $\hpiz$ the primal optimizer, i.e., the process
in $\sPm^2$ such that
\[ \hXz = \EN( \int_0^{\cdot} \hpiz_u\, d\Rz_u).\]
We define the probability measure $\PPz$ 
 by
 \begin{equation}
 \label{equ:Girsanov}
 \begin{split}
   \rn{\PPz}{\PP} = \hXz_T \hYz_T\, \Big(= \oo{\vz} V(\hYz_T) = \oo{\uz}
U(\hXz_T)\Big),
 \end{split}
 \end{equation}
 \revised{where the last two equalities 
 follow from the  identities $x U'(x) = p U(x)$ and $ y V'(y) = -q V(y)$, and the relations between the value functions outlined above.}

The measure $\PPz$ has been in the mathematical finance literature for a while (see, e.g., p. 911-2 in \cite{KraSch99}).
The explicit form of $\PPz$ is not generally available, but, we note that, by Girsanov's Theorem (see \eqref{equ:ZENL} and the discussion around it),
the process
\begin{equation}
 \label{equ:tMp}
 \begin{split}
   \tMp := M + \int_0^{\cdot} \Big( \ld_t - \hpiz_t \Big)\, d\ab{M}_t 
 \end{split}
 \end{equation}
is a $\PPz$-local martingale; \revised{this fact will be used below in the proof of Proposition \ref{pro:u-down}.} 

\section{The problem and the main results}

We first provide first-order expansions and error estimates of the primal and dual value functions. Secondly, we provide an expansion of the optimal controls in the Brownian setting.

\subsection{Value functions}
At the basic level, we are interested in the first-order properties
of the convergence, as $\eps\downto 0$, of the value
functions of the problems $\ue$ and $\ve$ to the value functions $\uz$ and
$\vz$ of the
``base'' model (corresponding to $\eps=0$). To familiarize ourselves with
the flavor of the results we can expect in the general case, we start by
analyzing a similar problem for the logarithmic utility. It has the
advantage that it admits a simple explicit solution. Let $\uel(x)$ and
$\vel(y)$
denote the value function of the utility maximization problem as in 
\eqref{equ:ue} and \eqref{equ:ve} above, but with $U(x)=\log(x)$ and
$V(y) = \sup_x (U(x)-xy)=-\log(y) -1$.
It is a  classical result that, as long as $\EE[ \int_0^T (\ld_t^2 + (\ld'_t)^2)\, d\ab{M}_t]<\infty$,  we 
have
\[ \uel(x) = \log(x)+\tot \EE[ \int_0^T (\lde_t)^2\, d\ab{M}_t] \eand
\vel = \uel -1.\] The (exact) second-order expansion in $\eps$ 
of $\uel(x)$ is thus
given by
 \begin{equation*}
 \label{equ:4AA5}
 \begin{split}
    \uel(x) &= \uzl(x) + \eps \EE[ \int_0^T \ld_t \ld'_t\, d\ab{M}_t] + \tot \eps^2
   \EE[ \int_0^T (\ld'_t)^2\, d\ab{M}_t]\\
    &=\uzl(x) + \eps \EE[ \int_0^T \ld'_t\, d\Rz_t] + \tot \eps^2
   \EE[ \int_0^T (\ld'_t)^2\, d\ab{M}_t],\\
 \end{split}
 \end{equation*}
 where $\Rz$ is defined in (\ref{equ:7283}).
 We cannot expect the value function to be a second order polynomial in
 $\eps$ in the case of a general power utility. We do obtain a formally similar
 first-order expansion in Theorem \ref{thm:main1} below and an analogous
 error estimate in Theorem \ref{thm:main2}. Section 5 
  is devoted to their proofs. We remind the reader of the homogeneity
 relationships in \eqref{equ:simpl}; they allow us to assume from now on
 that $x=y=1$.
\begin{theorem}[The G\^ateaux derivative]
\label{thm:main1}
In the setting of Section \ref{sec:problem}, we assume that 
\begin{equation}
\label{equ:A1}
\begin{split}
\int_0^T(\ld'_t)^2\, d\ab{M}_t \in \el^{1-p}(\PP) \eand \int_0^T \ld'_t\, d\Rz_t
\in \cup_{s>(1-p)} \el^s(\PP).
\end{split}
\end{equation}
Then, with
$\Dz := \EE^{\PPz}[\textstyle \int_0^T \ld'_t\, d\Rz_t]$, where $\PPz$ is defined by \eqref{equ:Girsanov}, we have
\begin{align}
 \label{equ:main1-u}
 \derep{\ue}:=\lim_{\eps \downto 0}  \oo{\eps} \Big( \ue - \uz \Big) 
 &= p \uz \Dz, \eand \\
\label{equ:main1-v}
 \derep{\ve}:=\lim_{\eps \downto 0}  \oo{\eps} \Big( \ve - \vz \Big) 
 &= q\vz \Dz.\qedhere
\end{align}
\end{theorem}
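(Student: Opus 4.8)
\emph{Plan.} I would prove \eqref{equ:main1-u} by squeezing $\oo\eps(\ue-\uz)$ between a lower bound coming from primal sub-optimality and an upper bound coming from dual sub-optimality, and then read off \eqref{equ:main1-v} from \eqref{equ:main1-u} by convex duality; throughout I fix $x=y=1$ using the scaling relations \eqref{equ:simpl}. For the duality bookkeeping, let $y_0:=p\uz$ and $x_0:=(p\uz)^{1/(1-p)}$ be the base-model optimal dual and primal variables, so that \eqref{equ:conj} at the base conjugate points reads $\uz=\vz(y_0)+y_0$ and $\vz(1)=\uz(x_0)-x_0$. Combining these equalities with \eqref{equ:conj} for the $\eps$-model (valid for all $\eps\ge0$) and with \eqref{equ:simpl}, one gets for every $\eps\ge0$
\[
x_0^{\,p}\big(\ue-\uz\big)\ \le\ \ve-\vz\qquad\text{and}\qquad\ue-\uz\ \le\ y_0^{-q}\big(\ve-\vz\big),
\]
while $q(1-p)=p$ and \eqref{equ:simpl} give the identities $x_0^{\,p}\,p\uz=q\vz$ and $y_0^{-q}\,q\vz=p\uz$. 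It therefore suffices to establish
\[
\liminf_{\eps\downto 0}\oo\eps\big(\ue-\uz\big)\ \ge\ p\uz\Dz\qquad\text{and}\qquad\limsup_{\eps\downto 0}\oo\eps\big(\ve-\vz\big)\ \le\ q\vz\Dz,
\]
since the reverse inequalities then follow from the two displayed relations.

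\emph{Lower bound (primal sub-optimality).} I would feed the base optimal proportion $\hpiz\in\sPm^2$ into the $\eps$-market. The wealth $\EN(\int_0^{\cdot}\hpiz_t\,d\Re_t)$ lies in $\sXe(1)$, and since $\Re-\Rz=\eps\int_0^{\cdot}\ld'_t\,d\ab{M}_t$ has finite variation the stochastic exponential factorizes, with terminal value $\hXz_T\,e^{\eps W}$, where $W:=\int_0^T\hpiz_t\ld'_t\,d\ab{M}_t$. Using $\rn{\PPz}{\PP}=\oo{p\uz}(\hXz_T)^p$ from \eqref{equ:Girsanov},
\[
\ue\ \ge\ \EE\big[U(\hXz_T\,e^{\eps W})\big]\ =\ \oo p\,\EE\big[(\hXz_T)^p e^{p\eps W}\big]\ =\ \uz\,\EE^{\PPz}\big[e^{p\eps W}\big].
\]
A dominated-convergence argument would give $\oo\eps\big(\EE^{\PPz}[e^{p\eps W}]-1\big)\to p\,\EE^{\PPz}[W]$, hence $\liminf_{\eps\downto 0}\oo\eps(\ue-\uz)\ge p\uz\,\EE^{\PPz}[W]$. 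Finally, \eqref{equ:tMp} gives $d\Rz=d\tMp+\hpiz\,d\ab{M}$ under $\PPz$, so $\int_0^T\ld'_t\,d\Rz_t=\int_0^T\ld'_t\,d\tMp_t+W$; as $\int_0^{\cdot}\ld'_t\,d\tMp_t$ is a $\PPz$-local martingale which is of class (D) under \eqref{equ:A1}, hence a true $\PPz$-martingale, $\EE^{\PPz}[W]=\Dz$, and the first estimate follows.

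\emph{Upper bound (dual sub-optimality).} The standard description of supermartingale deflators gives $\hYz=\Zz\,\EN(L^0)$ with $\EN(L^0)$ a positive supermartingale and $[L^0,M]=0$. Since $\Ze/\Zz=\EN(-\eps\int_0^{\cdot}\ld'_t\,d\Rz_t)$, I set $\tilde Y^{(\eps)}:=\Ze\,\EN(L^0)=\hYz\,\EN(-\eps\int_0^{\cdot}\ld'_t\,d\Rz_t)$. For every $X\in\sXe(1)$, $\Ze X$ is a positive local martingale driven by $M$ and hence orthogonal to $\EN(L^0)$, and the product of a positive local martingale and a positive supermartingale that are orthogonal is a supermartingale; thus $\tilde Y^{(\eps)}X$ is a supermartingale and $\tilde Y^{(\eps)}\in\sYe(1)$. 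With $\rn{\PPz}{\PP}=\oo{q\vz}(\hYz_T)^{-q}$ and $\EN(-\eps\int_0^{\cdot}\ld'_t\,d\Rz_t)_T^{-q}=\exp\big(q\eps\int_0^T\ld'_t\,d\Rz_t+\tfrac{q\eps^2}{2}\int_0^T(\ld'_t)^2\,d\ab{M}_t\big)$,
\[
\ve\ \le\ \EE\big[V(\tilde Y^{(\eps)}_T)\big]\ =\ \vz\,\EE^{\PPz}\big[\EN(-\eps{\textstyle\int_0^{\cdot}}\ld'_t\,d\Rz_t)_T^{-q}\big],
\]
and the analogous dominated-convergence step gives $\oo\eps\big(\EE^{\PPz}[\EN(-\eps\int_0^{\cdot}\ld'_t\,d\Rz_t)_T^{-q}]-1\big)\to q\,\EE^{\PPz}[\int_0^T\ld'_t\,d\Rz_t]=q\Dz$, so $\limsup_{\eps\downto 0}\oo\eps(\ve-\vz)\le q\vz\Dz$. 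Together with the first paragraph this proves \eqref{equ:main1-u} and \eqref{equ:main1-v}. (The upper bound on $\ue$ can instead be obtained primally, by feeding the $\eps$-optimal proportion $\hpie$ into the base market and invoking the convergence $\hpie\to\hpiz$, $\RRe\to\PPz$ of \cite{LarZit07} — that is where the stability result enters.)

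\emph{The main obstacle.} The conceptual skeleton is short; the real work — and the reason for the integrability hypotheses \eqref{equ:A1} — is in the two limit passages under the expectation and in the class-(D) property of $\int_0^{\cdot}\ld'_t\,d\tMp_t$. After the change of measure one must dominate, uniformly for $\eps$ near $0$, the difference quotients $\oo\eps(e^{p\eps W}-1)$ and $\oo\eps(\EN(-\eps\int_0^{\cdot}\ld'_t\,d\Rz_t)_T^{-q}-1)$ by a $\PPz$-integrable function. For the primal one, $|W|\le\big(\int_0^T(\hpiz_t)^2\,d\ab{M}_t\big)^{1/2}\big(\int_0^T(\ld'_t)^2\,d\ab{M}_t\big)^{1/2}$, the first factor being controlled under $\PPz$ through the optimality of $\hXz$; Hölder's inequality against $\rn{\PPz}{\PP}=\oo{p\uz}(\hXz_T)^p$ then closes the estimate precisely because $\int_0^T(\ld'_t)^2\,d\ab{M}_t\in\el^{1-p}(\PP)$ and $\EE[(\hXz_T)^{p-1}]=\EE[\hYz_T]\le1$ (the dual optimizer being a positive supermartingale), the required intermediate powers of $\hXz_T$ coming from interpolation between $p-1$ and $p$; for the dual quotient, the strict integrability $\int_0^T\ld'_t\,d\Rz_t\in\el^s(\PP)$ with $s>1-p$ is exactly what leaves room in the analogous interpolation (and what makes $\Dz$ well defined). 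A preliminary localization — stopping when $\int_0^{\cdot}(\hpiz_t)^2\,d\ab{M}_t$, $\int_0^{\cdot}(\ld'_t)^2\,d\ab{M}_t$, or $\ab{M}$ exceeds a fixed level, then removing it — would make these steps rigorous.
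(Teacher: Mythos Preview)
Your overall architecture---one primal bound, one dual bound, then conjugacy to close---is sound, and the dual upper bound works exactly as you say: since $\vz<0$, the convexity inequality $e^{q\eps\eta+q\eps^2\Lambda/2}\ge 1+q\eps\eta+q\eps^2\Lambda/2$ points in the right direction and yields $\limsup_{\eps\downto 0}\oo\eps(\ve-\vz)\le q\vz\Dz$ under \eqref{equ:A1} alone (finiteness of $\EE^{\PPz}[G_\eps]$ is automatic because $\tilde Y^{(\eps)}\in\sYe$ forces $\EE[V(\tilde Y^{(\eps)}_T)]\ge\ve>-\infty$). Your identification $\EE^{\PPz}[W]=\Dz$ via the $\sH^2(\PPz)$-property of $\int\ld'\,d\tMp$ is also correct.

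The gap is in the primal lower bound. You need $\liminf_{\eps\downto 0}\oo\eps(\ue-\uz)\ge p\uz\Dz$, and from $\ue\ge\uz\,\EE^{\PPz}[e^{p\eps\Phi}]$ with $\uz<0$ this reduces to $\limsup_{\eps\downto 0}\oo\eps\big(\EE^{\PPz}[e^{p\eps\Phi}]-1\big)\le p\Dz$. But convexity gives the \emph{opposite} inequality $\oo\eps\big(\EE^{\PPz}[e^{p\eps\Phi}]-1\big)\ge p\Dz$; to get the $\limsup$ you must control the second-order remainder $\EE^{\PPz}[e^{p\eps\Phi}-1-p\eps\Phi]$, and that requires precisely the extra hypothesis $\Phi^2 e^{\eps_0|p|\Phi^-}\in\lone(\PPz)$ which appears only in Theorem~\ref{thm:main2}, not in \eqref{equ:A1}. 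Your Cauchy--Schwarz/interpolation sketch gives at best $\Phi\in\lone(\PPz)$, which is not enough, and no localization repairs this since the issue is a genuine lack of exponential integrability. (The asymmetry with the dual side is that there the sign alignment lets convexity do the work; here it does not.)

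The paper avoids this difficulty by obtaining \emph{both} inequalities on the dual side: the easy one with $\Ke=\hHz_T$ (your dual bound), and the hard one with $\Ke=\hHe_T$, where the stability result $\hHe_T\to\hHz_T$ in probability from \cite{LarZit07} is invoked. A uniform-integrability argument (de la Vall\'ee--Poussin combined with H\"older against the supermartingale bound $\EE[\Zd_T\Ke]\le1$, which is exactly where the strict inequality $s>1-p$ in \eqref{equ:A1} is used) then gives the dual $\liminf$, and conjugacy transfers everything to the primal side. You do mention the stability route in your parenthetical, but you attach it to the upper bound on $\ue$---which you already have via conjugacy---rather than to the missing lower bound. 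Redirecting that idea (use $\hHe$ in the dual problem, not $\hpie$ in the primal) is what makes the proof go through under \eqref{equ:A1} alone.
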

\begin{theorem}[An error estimate]
\label{thm:main2} 
In the setting of Section \ref{sec:problem}, we assume that 
\begin{equation}
\label{equ:A2}
\begin{split}
\int_0^T(\ld'_t)^2\, d\ab{M}_t, \int_0^T \ld'_t\, d\Rz_t
\in \el^{2(1-p)}(\PP) \eand \Phi^2 e^{\eps_0 |p| \Phi^-}\in\lone(\PPz),
\end{split}
\end{equation}
for some $\eps_0>0$, where $\Phi := \int_0^T \hpiz_t \ld'_t\, d\ab{M}_t$. 
Then there exist constants $C>0$ and $\eps_0'\in (0,\eps_0]$ such that for all $\eps\in [0,\eps'_0]$ we have
\begin{align}
\label{equ:main2-u}
\Big| \ue - \uz - \eps p  \uz \Dz \Big| & \leq 
 C \eps^2, \eand
 \\ \label{equ:main2-v}
\Big| 
\ve - \vz - \eps q \vz \Dz \Big| &
 \leq C  \eps^2.
 \end{align}
\end{theorem}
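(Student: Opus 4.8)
The plan is to trap $\ue$ between two explicit quantities, each within $O(\eps^2)$ of $\uz+\eps p\,\uz\Dz$: a \emph{lower} bound obtained by running a fixed suboptimal portfolio in the $\eps$--market, and an \emph{upper} bound obtained from a carefully chosen element of the $\eps$--dual domain. As stressed in the introduction, $\ue$ is, as a function of the market price of risk, neither convex nor concave, so neither one-sided bound can be dispensed with; the crux is that the two bounds turn out to have the \emph{same} first-order coefficient $\Dz$. The estimate \eqref{equ:main2-v} for $\ve$ then follows from \eqref{equ:main2-u}: for each fixed $\eps$, \eqref{equ:conj} and the homogeneity \eqref{equ:simpl} give the exact relation $\ve=\tfrac{1-p}{p}\bigl(p\,\ue\bigr)^{1+q}$ (equivalently $\ue=\tfrac{1+q}{q}(q\,\ve)^{1/(1+q)}$), and feeding $\ue=\uz+\eps p\,\uz\Dz+O(\eps^2)$ into this smooth map, using $(1+q)p=q$, yields $\ve=\vz+\eps q\,\vz\Dz+O(\eps^2)$ with an explicit constant; alternatively, \eqref{equ:main2-v} may be proved by the symmetric sandwich with the roles of the primal and dual perturbations interchanged.

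\textbf{Lower bound on $\ue$.} Run the base-model optimizer $\hpiz$ in the $\eps$--market: $X^\eps:=\EN\bigl(\int_0^{\cdot}\hpiz_t\,d\Re_t\bigr)\in\sXe(1)$. Since $\Re-\Rz=\eps\int_0^{\cdot}\ld'_t\,d\ab{M}_t$ is of finite variation, $X^\eps_T=\hXz_T\,e^{\eps\Phi}$ with $\Phi=\int_0^T\hpiz_t\ld'_t\,d\ab{M}_t$ as in \eqref{equ:A2}. Using $U(\hXz_T)=\uz\,\tfrac{d\PPz}{d\PP}$ from \eqref{equ:Girsanov}, $\ue\ge\EE[U(X^\eps_T)]=\uz\,\EE^{\PPz}[e^{\eps p\Phi}]$. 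A Taylor expansion of $t\mapsto e^{\eps pt}$ with integral remainder, the elementary bound $e^{\theta\eps p\Phi}\le e^{\eps_0|p|\Phi^-}$ for $\theta\in[0,1]$, $\eps\le\eps_0$, and the hypothesis $\Phi^2 e^{\eps_0|p|\Phi^-}\in\lone(\PPz)$ give $\bigl|\EE^{\PPz}[e^{\eps p\Phi}]-1-\eps p\,\EE^{\PPz}[\Phi]\bigr|\le C\eps^2$, and en route $\Phi\in\ltwo(\PPz)$. Finally $\EE^{\PPz}[\Phi]=\Dz$: by \eqref{equ:tMp}, $\int_0^T\ld'_t\,d\Rz_t-\Phi=\int_0^T\ld'_t\,d\tMp_t$ is a $\PPz$--local martingale with quadratic variation $\int_0^T(\ld'_t)^2\,d\ab{M}_t\in\lone(\PPz)$ (see the transfers below), hence a true $\PPz$--martingale, so its terminal $\PPz$--mean vanishes. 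Since $\uz<0$, this gives $\ue\ge\uz+\eps p\,\uz\Dz-C\eps^2$.

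\textbf{Upper bound on $\ue$.} I claim $Y^\eps:=\hYz\,\Ze/\Zz\in\sYe(1)$. A stochastic-exponential computation gives $\Ze/\Zz=\EN\bigl(-\eps\int_0^{\cdot}\ld'_t\,d\Rz_t\bigr)$ and, for any $\pi\in\sPm^2$, $\tfrac{\Ze}{\Zz}\,\EN\bigl(\int_0^{\cdot}\pi_t\,d\Re_t\bigr)=\EN\bigl(\int_0^{\cdot}(\pi_t-\eps\ld'_t)\,d\Rz_t\bigr)$; since $\pi-\eps\ld'\in\sPm^2$ and $\hYz\in\sYz(1)$, the product $Y^\eps X$ is a $\PP$--supermartingale for every $X\in\sXe(1)$, which proves the claim. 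By \eqref{equ:conj}--\eqref{equ:simpl}, $\ue=\inf_{y>0}\bigl(y^{-q}\ve+y\bigr)\le\inf_{y>0}\bigl(y^{-q}\EE[V(Y^\eps_T)]+y\bigr)$. With $V(\hYz_T)=\vz\,\tfrac{d\PPz}{d\PP}$ and $\Ze_T/\Zz_T=e^{-\eps W-\tot\eps^2 Q}$, where $W:=\int_0^T\ld'_t\,d\Rz_t$ and $Q:=\int_0^T(\ld'_t)^2\,d\ab{M}_t$, one gets $\EE[V(Y^\eps_T)]=\vz\,\beta_\eps$ with $\beta_\eps:=\EE^{\PPz}[e^{q\eps W+\tot q\eps^2 Q}]$, and the inner infimum evaluates to $\uz\,\beta_\eps^{1/(1+q)}$ by the same Legendre computation that yields $\uz=\tfrac{1+q}{q}(q\vz)^{1/(1+q)}$. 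Moreover $\beta_\eps<\infty$ automatically, since $Y^\eps\in\sYe(1)$ forces $\EE[V(Y^\eps_T)]\ge\ve>-\infty$. Jensen's inequality, together with $\EE^{\PPz}[W]=\Dz$ (the definition of $\Dz$) and $\EE^{\PPz}[Q]<\infty$, gives $\beta_\eps\ge e^{q\eps\Dz+\tot q\eps^2\EE^{\PPz}[Q]}\ge 1+q\eps\Dz-C\eps^2$ for small $\eps$, hence $\beta_\eps^{1/(1+q)}\ge 1+\tfrac{q}{1+q}\eps\Dz-C'\eps^2=1+\eps p\,\Dz-C'\eps^2$; since $\uz<0$, $\ue\le\uz+\eps p\,\uz\Dz+C''\eps^2$. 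Combined with the lower bound this is \eqref{equ:main2-u}.

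\textbf{Integrability transfers and the main obstacle.} From $U'(\hXz_T)=p\uz\,\hYz_T$ (the base-model case of the optimizer relation in Section~\ref{sec:problem}) and \eqref{equ:simpl} we get $\hYz_T=(\hXz_T)^{p-1}/(p\uz)$, so $\EE[(\hXz_T)^{p}]=p\uz<\infty$ and $\EE[(\hXz_T)^{p-1}]=p\uz\,\EE[\hYz_T]\le p\uz<\infty$; by log-convexity of $s\mapsto\EE[(\hXz_T)^s]$ this holds for all $s\in[p-1,p]$, and Hölder's inequality against $\tfrac{d\PPz}{d\PP}=(\hXz_T)^p/(p\uz)$ converts the $\el^{2(1-p)}(\PP)$--hypotheses on $W$ and $Q$ in \eqref{equ:A2} into the $\PPz$--integrability used above. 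The genuinely delicate points are the dual-feasibility of $Y^\eps=\hYz\,\Ze/\Zz$ (the stochastic-exponential identity) and, above all, showing that the two one-sided bounds are expansions of the \emph{same} quantity — i.e., the identification $\EE^{\PPz}[\Phi]=\EE^{\PPz}[W]=\Dz$ together with the uniform $O(\eps^2)$ control of both remainders on a common interval $[0,\eps_0']$, which is precisely where the full strength of \eqref{equ:A2} (both the $\el^{2(1-p)}$ moments and the exponential moment of $\Phi^-$) is consumed.
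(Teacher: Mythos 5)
Your proposal is correct, and its overall architecture is the paper's: the lower bound for $\ue$ is obtained exactly as in the paper's Proposition \ref{pro:u-down} (run $\hpiz$ in the $\eps$-market, Taylor-expand $e^{p\eps\Phi}$ under $\PPz$ using the exponential-moment hypothesis, and identify $\EE^{\PPz}[\Phi]=\Dz$ through the $\PPz$-martingale $\int_0^{\cdot}\ld'_t\,d\tMp_t$ with $\ab{\cdot}_T=\int_0^T(\ld'_t)^2\,d\ab{M}_t\in\lone(\PPz)$), and your dual test object $Y^{\eps}=\hYz\,\Ze/\Zz$ is literally the paper's $\Ze\hHz$ via \eqref{equ:ZENL}. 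Where you genuinely diverge is in how the dual side is exploited and how the conjugacy is used. The paper's Proposition \ref{pro:v-up} bounds $\ve$ from above by expanding $V(\Ze_T\hHz_T)-V(\Zz_T\hHz_T)$ through the $\delta$-integral representation \eqref{equ:primal1} and controlling the remainders $A_\eps,B_\eps$ by H\"older, which is where the $\el^{2(1-p)}(\PP)$ moments are consumed and which yields fully explicit constants $C_v,C_v'$ valid for all $\eps\ge 0$; it then transfers one bound in each direction through $p\ue=(q\ve)^{1-p}$ (the $F(\eps)$ argument). You instead bound $\ue$ from above directly, writing $\EE[V(Y^\eps_T)]=\vz\,\EE^{\PPz}\big[e^{q\eps\eta+\frac12 q\eps^2\Lambda}\big]$, evaluating the Legendre infimum over $y$ to get $\ue\le\uz\beta_\eps^{1-p}$, and then using Jensen plus the tangent-line inequality for $x\mapsto x^{1-p}$; this is shorter, needs only $\eta,\Lambda\in\lone(\PPz)$ (so $\el^{1-p}(\PP)$ moments would suffice for that half, slightly weaker than \eqref{equ:A2}), still produces an explicit constant $\tfrac12(1-p)|q|\,|\uz|\,\EE^{\PPz}[\Lambda]$, and in fact anticipates the exact exponential-representation-under-$\PPz$ technique the paper only deploys later in the proof of Theorem \ref{thm:main3}. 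Your transfer of \eqref{equ:main2-v} from the two-sided bound \eqref{equ:main2-u} via smoothness of $x\mapsto\tfrac1q(px)^{1+q}$ near $\uz$ is the same device as the paper's, just applied in one shot rather than once per one-sided bound; the only points to make fully explicit are the uniform second-derivative bound on a compact neighborhood of $\uz$ (available since \eqref{equ:main2-u} forces $\ue\to\uz$) and, in the Jensen step, that the tangent-line bound $\beta_\eps^{1-p}\ge 1+(1-p)(\beta_\eps-1)$ removes any positivity caveat on the quadratic lower estimate of $\beta_\eps$.
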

\begin{remark}\ 
\label{rem:761C}
\begin{enumerate}
\item
It is perhaps more informative to think of the results in Theorems
\ref{thm:main1} and \ref{thm:main2} on the logarithmic scale. 
As is evident from \eqref{equ:main1-u} and
\eqref{equ:main1-v}, the functions $\ue$ and $\ve$ admit the right
\emph{logarithmic} derivative $p\Dz$ and $q\Dz$, respectively,  at
$\eps=0$. Moreover, we have the following small-$\eps$ asymptotics:
\[ 
\ue = \uz e^{\eps p \Dz + O(\eps^2)} \eand
\ve = \vz e^{\eps q \Dz + O(\eps^2)}.\]
If one takes one step further and uses the \emph{certainty
equivalent} $\ce$, given by
\[ U(\ce) = \ue,\]
we note that $\Dz$ is precisely the infinitesimal growth-rate of $\ce$ at
$\eps=0$ - an $\eps$-change of the market price of risk in the direction
$\ld'$ yields to an $e^{\eps \Dz}$-fold increase in the certainty-equivalent
of the initial wealth. 
\item \label{rem:main2}
A careful analysis of 
the proof of Theorem \ref{thm:main2} below reveals the following,
additional, information:
\begin{enumerate}
  \item The proof of Proposition \ref{pro:u-down} reveals that
  $\Dz=\EE^{\PPz}[\Phi]$. 
  \item The condition involving $\Phi$ in \eqref{equ:A2} is needed only for
  the upper bound in \eqref{equ:main2-u} and the lower bound in
  \eqref{equ:main2-v}. The other two bounds hold for all 
  $\eps\geq 0$ even 
  if \eqref{equ:A2} holds with $\eps_0=0$. 
  \item The constants $C$ and $\eps_0'$ depend - in a simple way - on
  $\eps_0$, $p$ and the $\el^{2(1-p)}(\PPz)$- and 
  $\lone(\PPz)$-bounds of the random variables in \eqref{equ:A2}. 
  For two one-sided bounds, explicit formulas are given in Propositions \ref{pro:v-up}
  and \ref{pro:u-down}. The other two bounds are somewhat less informative so we do not compute them explicitly. The reader will find an example of how this can be done in a specific setting in Subsection \ref{sse:extaff}. 
   \item Even though we cannot claim that the functions $\ue$ and $\ve$ are convex or concave, it is possible to show their local {\it semiconcavity} in $\eps$  (see \cite{Can04}).
   This can be done via the techniques from the proof of  Theorem~\ref{thm:main2}.
\end{enumerate} 
\item
The assumption of constant risk aversion (power utility) 
allows us to incorporate many stochastic interest-rate models into our setting.
Indeed, provided that $c:= \ee{e^{p\int_0^T
r_tdt}} <\infty$, we can introduce the probability measure $\PP^r$, defined by
\begin{align}
\rn{\PP^r}{\PP}:= c e^{p\int_0^T r_t\, dt},
\end{align}
on $\sF_T$. For any admissible wealth process $X$ we then have
\[ \EE[ U(X_T) ] = c\, \EE^{\PP^r}\left[ U\Big( X_T e^{-\int_0^T r_u\, du}\Big) \right].\]
This way, the utility maximization under $\PP^r$ with a zero interest rate
becomes equivalent to the utility maximization problem under $\PP$ with
the interest rate process $\prf{r_t}$. \cite{Zitkovic2005} and \cite{Mostovyi2011} consider the setting of utility maximization  with stochastic utility which embeds stochastic interest rates. 

Practical implementation of the above idea depends on how explicit one can
be about the Girsanov transformation associated with $\PP^r$. It turns out,
fortunately, that many of the widely-used interest-rate models, 
such as Vasi\v cek, CIR, or the quadratic normal models (see, e.g.,
\cite{Mun13} for a textbook discussion of these models)
allow for a fully
explicit description (often due to their affine structure). For example, in
the Vasi\v cek model, the Girsanov drift under $\PP^r$ can be computed
quite explicitly, due to the underlying affine structure.
 Indeed, suppose that $r$ has the Ornstein-Uhlenbeck dynamics of
the form
\[ dr_t := \kappa (\theta-r_t)\, dt + \beta\, dB_t,\ r_0\in\R,\]
where $B$ is a Brownian motion and $\kappa>0$, $\theta,\beta\in\R$.
Then the process
\[ B^{(p)} := B - \int_0^{\cdot} b(T-t)\, dt,\ewhere
b(t) = \tfrac{\beta p}{\kappa} (1-e^{-\kappa t}),\]
is a $\PP^r$-Brownian motion. 
\end{enumerate}
\end{remark}

\subsection{Optimal controls}
The estimates \eqref{equ:main2-u} and \eqref{equ:main2-v} are of type $O(\eps^2)$. A slight adjustment to the below proof of Proposition \ref{pro:u-down}  shows that the wealth process $\tilde{X} :=\mathcal{E}\big(\int\hat{\pi}^{(0)}dR^{(\eps)})$ satisfies (see \ref{lower_bound_est})
\begin{align*}
\big|\EE[U(\tilde{X}_T)] - u^{(0)}(1+\eps p \Delta^{(0)}) \big|&\le \frac12p^2\eps^2 |u^{(0)}| \EE^{\PPz}[\Phi^2e^{\eps |p| \Phi^-}].
\end{align*}

Therefore, under the conditions of Theorem \ref{thm:main2}, $\hat{\pi}^{(0)}$ is an $O(\eps^2)$-optimal control for the $\eps$-model because the triangle inequality produces a constant $C>0$ such that
$$
\big|\EE[U(\tilde{X}_T)] -u^{(\eps)} \big| \le C \eps^2,
$$
for all $\eps>0$ small enough. In this section we will provide a correction term to $\hat{\pi}^{(0)}$ such that the resulting wealth process upgrades the convergence  to $o(\eps^2)$. 

For simplicity, we consider the (augmented) filtration generated by $(B,W)$ where $B\in \R$ and $W\in\R^d$, $d\in\mathbb{N}$, are two independent Brownian motions. In \eqref{equ:7283} we take 
\begin{align}\label{dM_BM}
dM_t := \sigma_t dB_t,\quad M_0:=0,
\end{align}
for a process $\sigma\in \sP_{B}^2$ with $\sigma\neq0$. We define $\PPz$ by
  \eqref{equ:Girsanov} and we denote by $(B^{\PPz},W^{\PPz})$ the
  corresponding $\PPz$-Brownian motions. Provided that $\Phi :=\int_0^T
  \hpiz_t \ld'_t\sigma^2_t dt \in \el^{2}(\PPz)$, $\Phi$ has the unique
  martingale representation under $\PPz$
\begin{align}\label{Itorep}
\Phi = \EE^{\PPz}[\Phi] + \int_0^T \gamma^B_{t}\sigma_tdB^{\PPz}_t  +  \int_0^T \gamma^W_{t} dW^{\PPz}_t,
\end{align}
where we have used $\sigma\neq0$. Because $\Phi\in \el^{2}(\PPz)$ the two processes $\gamma^W$ and $\gamma^B$ in \eqref{Itorep} satisfy the integrability conditions
$$
\EE^{\PPz}\Big[\int_0^T\Big( (\gamma^B_{t}\sigma_t)^2 + (\gamma^W_t)^2\Big)dt\Big]<\infty.
$$
These square integrability properties will be used in the proof of the next theorem.

\begin{theorem}[2nd order expansion] \label{thm:main3}  In the above Brownian setting, we assume
\begin{align}\label{2nd_int}
\int_0^T(\ld'_t)^2\sigma^2_tdt \in \el^{1-p}(\PP)\cap \el^{1}(\PPz) 
\eand \int_0^T  \hat{\pi}_t^{(0)} \lambda_t' \sigma_t^2 dt \in \el^{2}(\PPz),
\end{align}
as well as the existence of a constant $\eps_0 >0$ such that $\delta :=\tfrac{\lambda'+p\gamma^B}{1-p}$ satisfies
\begin{align}\label{exp_reg}
e^{p\int_0^T\big (\eps\hat{\pi}^{(0)} \lambda'+ \eps^2 (\delta \lambda'-\frac12\delta^2 ) \big)\sigma^2dt + p\eps \int_0^T\delta\sigma dB_t^{\PPz} }\in \el^{1}(\PPz),
\end{align}
for all $\eps \in (0,\eps_0)$. Then we have
\begin{align}
\label{equ:main3-u}
 \ue - \uz - \eps p  \uz \Dz - \tfrac12 \eps^2p\uz\left(\Delta^{(00)}+p(\Delta^{(0)})^2 \right)& \in O(\eps^3),
 \\ \label{equ:main3-v}
\ve - \vz - \eps q \vz \Dz  - \tfrac12 \eps^2q\vz\left(\Delta^{(00)}+q(\Delta^{(0)})^2 \right) &
\in O(\eps^3),
 \end{align}
as $\eps\downto 0$. In \eqref{equ:main3-u} and \eqref{equ:main3-v} we have defined
 \begin{align}
\Delta^{(00)} &:= \EE^{\PPz}\left[ \int_0^T\left(p|\gamma^W_{t}|^2 +\frac{(\lambda'_t)^2 + p\gamma^B_t(\gamma^B_t+2\lambda'_t)}{1-p}\sigma^2_t\right)dt\right],\label{Delta00}
\end{align}
where the processes $\gamma^B$ and $\gamma^W$ are given by the
martingale representation \eqref{Itorep}.
\end{theorem}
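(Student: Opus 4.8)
The plan is to deduce the dual statement \eqref{equ:main3-v} from the primal statement \eqref{equ:main3-u}, and to prove \eqref{equ:main3-u} by sandwiching $\ue$ between two second-order expansions with $O(\eps^3)$ errors --- a lower bound from an explicit near-optimal wealth process and an upper bound from an explicit near-optimal dual deflator. For the reduction I would combine the homogeneity relations \eqref{equ:simpl} with the conjugacy \eqref{equ:conj}: a short Legendre computation gives the \emph{exact} identity $\ve=\tfrac1q\,(p\,\ue)^{1/(1-p)}$, valid for every $\eps\ge 0$. Plugging \eqref{equ:main3-u} into it and Taylor-expanding $z\mapsto z^{1/(1-p)}$ to second order turns the coefficient $\Delta^{(00)}+p(\Dz)^2$ into $\Delta^{(00)}+q(\Dz)^2$, exactly as in \eqref{equ:main3-v}; the normalization $p|\gamma^W|^2$ in \eqref{Delta00} is precisely the one that makes this consistent on both sides.

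\emph{Lower bound.} I would take the corrected control $\hpiz+\eps\delta$, $\delta=\tfrac{\lambda'+p\gamma^B}{1-p}$, and the wealth process $\tilde X:=\EN\big(\int_0^{\cdot}(\hpiz_u+\eps\delta_u)\,d\Re_u\big)\in\sXe(1)$, whose feasibility follows from \eqref{2nd_int} and the square-integrability of $\gamma^B$. Changing measure to $\PPz$ via \eqref{equ:Girsanov}, writing $U$ and the stochastic exponentials out explicitly, and applying Girsanov to pass from $B$ to $B^{\PPz}$, the finite-variation terms cancel and one obtains
\[\EE[U(\tilde X_T)]=\uz\,\EE^{\PPz}\!\left[\exp\Big(p\,\Big(\eps\Phi+\eps^2\!\int_0^T\big(\delta_t\lambda'_t-\tot\delta_t^2\big)\sigma^2_t\,dt+\eps\!\int_0^T\delta_t\sigma_t\,dB^{\PPz}_t\Big)\Big)\right]=:\uz\,\EE^{\PPz}\!\big[e^{pG_\eps}\big],\]
which is finite by \eqref{exp_reg}. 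Since $\ue\ge\EE[U(\tilde X_T)]$, a second-order Taylor expansion of the exponential --- using $\EE^{\PPz}[\Phi]=\Dz$, the martingale representation \eqref{Itorep}, It\^o's isometry, and the independence of $B^{\PPz}$ and $W^{\PPz}$ --- gives $\ue\ge\uz\big(1+\eps p\Dz+\eps^2\Psi(\delta)\big)+O(\eps^3)$, where $\Psi$ is a quadratic functional of $\delta$. Minimizing its integrand pointwise over $\delta$ is an elementary scalar optimization: the quadratic opens upward because $p(p-1)>0$, its minimizer is exactly $\tfrac{\lambda'+p\gamma^B}{1-p}$, and its minimal value equals $\tfrac12 p\big(\Delta^{(00)}+p(\Dz)^2\big)$ with $\Delta^{(00)}$ as in \eqref{Delta00}; this is the origin of the choice of $\delta$.

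\emph{Upper bound.} I would run the mirror-image argument on the dual side. Writing the base dual optimizer as $\hYz=\EN\big(-\int_0^{\cdot}\lambda_t\,dM_t+\int_0^{\cdot}\rho^W_t\,dW_t\big)$ (a local martingale in this setting, as follows from the facts recalled around \eqref{equ:tMp}), take $\tilde Y:=\EN\big(-\int_0^{\cdot}\lde_t\,dM_t+\int_0^{\cdot}\big(\rho^W_t+\eps p\gamma^W_t\big)\,dW_t\big)\in\sYe(1)$; feasibility is a one-line Yor-formula computation. By \eqref{equ:conj} and homogeneity, $\ue=\inf_{y>0}\big(\ve(y)+y\big)\le\inf_{y>0}\big(y^{-q}\EE[V(\tilde Y_T)]+y\big)=\tfrac1p\big(q\,\EE[V(\tilde Y_T)]\big)^{1-p}$. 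A change of measure to $\PPz$ and another application of Girsanov again make the finite-variation terms cancel, leaving $\EE[V(\tilde Y_T)]=\vz\,\EE^{\PPz}\!\big[e^{-q H_\eps}\big]$ with
\[H_\eps=\eps\Big({-\int_0^T\lambda'_t\sigma_t\,dB^{\PPz}_t}+\int_0^Tp\gamma^W_t\,dW^{\PPz}_t-\Phi\Big)-\tot\eps^2\int_0^T\big((\lambda'_t)^2\sigma^2_t+p^2(\gamma^W_t)^2\big)\,dt;\]
expanding to second order and then raising to the power $1-p$ produces $\ue\le\uz\big(1+\eps p\Dz+\tfrac12 p\eps^2(\Delta^{(00)}+p(\Dz)^2)\big)+O(\eps^3)$ --- the same polynomial as the lower bound, the correction $\eps p\gamma^W$ being exactly the one making the two coincide. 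Sandwiching gives \eqref{equ:main3-u}, and the conjugacy step gives \eqref{equ:main3-v}.

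\emph{Main obstacle.} The delicate part is the justification of the $O(\eps^3)$ remainders. Using the elementary estimate $|e^z-1-z-\tot z^2|\le\tfrac16|z|^3 e^{z^+}$ with $z=pG_\eps$, the primal remainder is dominated by a constant times $\eps^3\,\EE^{\PPz}\!\big[|\Xi|^3 e^{\eps|p|\Xi^-}\big]$, where $\Xi:=\Phi+\int_0^T\delta_t\sigma_t\,dB^{\PPz}_t$ is the leading-order fluctuation of $G_\eps/\eps$; this must be bounded uniformly for small $\eps$, by interpolating (via H\"older's inequality) the $\el^2(\PPz)$- and $\el^{2(1-p)}(\PP)$-moments in \eqref{2nd_int} against the exponential integrability supplied by \eqref{exp_reg}, and similarly on the dual side, where the sublinear growth of $V$ and Jensen's inequality additionally keep $\EE[V(\tilde Y_T)]$ bounded below, making that side somewhat easier. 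One must also verify that all the third-order cross terms produced by the expansion really are $O(\eps^3)$ and not $O(\eps^2)$; this is where the exact algebraic cancellation of the finite-variation parts under $\PPz$, and the orthogonal $B^{\PPz}/W^{\PPz}$ split coming from \eqref{Itorep}, make the bookkeeping close.
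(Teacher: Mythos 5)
Your proposal is correct and follows essentially the same route as the paper: the same corrected primal control $\hpiz+\eps\tfrac{\lambda'+p\gamma^B}{1-p}$ and corrected dual control $\hat\nu^{(0)}-\eps p\gamma^W$, the change of measure to $\PPz$, a second-order Taylor expansion of the resulting exponentials, and the conjugacy $p\ue=(q\ve)^{1-p}$ to sandwich the two bounds (your pointwise minimization over $\delta$ merely makes explicit why that correction term is the right one, which the paper asserts without motivation). The only caveat is a small slip in your remainder discussion: \eqref{2nd_int} supplies $\el^{2}(\PPz)$- and $\el^{1-p}(\PP)$-moments rather than $\el^{2(1-p)}(\PP)$-moments, though the paper itself treats the $O(\eps^3)$ remainder no more carefully than you do.
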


\begin{remark}\ 
\label{rem:controls}
\begin{enumerate}
\item The below proof of Theorem \ref{thm:main3} shows that the process
\begin{align}\label{optimal_control_correction}
\tilde{\pi} := \hat{\pi}^{(0)} + \eps \tfrac{\lambda'+p\gamma^B}{1-p},
\end{align}
is an $O(\eps^3)$-optimal control for the $\eps$-model in the sense that the wealth process $\tilde{X} :=\mathcal{E}\big(\int\tilde{\pi}dR^{(\eps)})$ satisfies
$$
\EE[U(\tilde{X}_T)] -u^{(\eps)} \in O(\eps^3)\quad \text{as}\quad \eps \downto 0.
$$
\item Because the filtration is generated by $(B,W)$, the optimizer $\hat{H}^{(0)}$  for the dual problem \eqref{equ:ve} can be written as $\hat{H}^{(0)} = \mathcal{E}(-\int \hat{\nu}^{(0)}dW)$ for a $d$-dimensional process  $\hat{\nu}^{(0)}$ in $\sP_{W}^2$. The below proof of Theorem \ref{thm:main3} also shows that the process
\begin{align}\label{optimal_dual_control_correction}
\tilde{\nu} := \hat{\nu}^{(0)} - \eps p\gamma^W,
\end{align}
is an $O(\eps^3)$-optimal dual control in the $\eps$-model.

\item Throughout the paper we have considered $\eps=0$ as the base model. Because we can write 
$$
\lambda + (\bar{\eps} +\eps)\lambda' = \lambda + \bar{\eps}\lambda' +\eps\lambda',
$$
for any $\bar{\eps}\in[\eps_L,\eps_U]$ with $\eps_L<\eps_U$, we can use Theorem \ref{thm:main3} for the base model $\lambda+ \bar{\eps}\lambda'$ to  provide a 2nd order Taylor expansion around any point $\bar{\eps}$. Therefore, whenever $\Delta^{(0)}$ and $\Delta^{(00)}$ are bounded uniformly in $\bar{\eps}\in[\eps_L,\eps_U]$, Theorem 3 in \cite{Oli54} ensures that $u^{(\eps)}$ is twice differentiable in $\eps$.
\item An easy way of eliminating the stochastic $B^{\PPz}$-integral  
appearing in \eqref{exp_reg} is to use H\"older's inequality with the exponents $-1/q$ and $(1-p)$; see Section \ref{sse:extaff} below for an example.
\end{enumerate}
\end{remark}

\section{Proofs of the main theorems}
\label{sec:proofs}
We start the proof with a short discussion of the special structure 
the dual domain $\sYe$ has when the stock-price process $\Se = \mathcal{E}(\Re)$ is
continuous.  Indeed, it has been shown in \cite{LarZit07}, Proposition 3.2, p.~1653,  that in that case the maximal
elements in $\sYe$ (in the pointwise order) are precisely 
local martingales of the form
 \begin{equation*}
 \begin{split}
   Y = \Ze H ,\ H\in\sH,
 \end{split}
 \end{equation*}
 where $\sH$ denotes the set of all $M$-orthogonal positive local
 martingales $H$ with $H_0=1$. 
We remark that even though the results in
\cite{LarZit07} were written under the assumption NFLVR, a simple 
localization argument shows that they apply under the present conditions,
as well.
   Hence, we can write
  \[ \ve = \inf_{H\in\sH} \EE[ V(\Ze_T H_T)],\]
and the minimizer $\hYe$ always has the
form \begin{equation}
 \label{equ:ZENL}
 \begin{split}
   \hYe = \Ze \hHe,\text{
for some $\hHe\in\sH$.}
 \end{split}
 \end{equation}
Finally, we introduce two shortcuts for expressions that appear
frequently in the proof:
\begin{equation}
\label{equ:etaLambda}
\begin{split}
 \eta := \int_0^T \ld'_t\, d\Rz_t,\ \Lambda := \int_0^T (\ld'_t)^2\,
 d\ab{M}_t,
\end{split}
\end{equation}
and remind the reader that
$\Phi := \int_0^T \hpiz_t \ld'_t\, d\ab{M}_t$ and 
 $\Delta^{(0)} := \EE^{\PPz}[ \eta]$. It will be useful
to keep in mind that $(1-p)(1+q)=1$ and that $-1/q$ and $1-p$ are
conjugate exponents. 
\subsection{A proof of Theorem \ref{thm:main1}}
The proof is based on the stability results of \cite{LarZit07} and the
following lemma:
\begin{lemma}
\label{lem:K-eps}
Let $\set{\Ke}_{\eps\geq 0}$ be a family of 
positive random variables such that 
\begin{enumerate}
  \item $\EE[ \Zd_T \Ke]\leq 1$ for all $\eps,\delta \geq 0$, and 
  \item $\Ke  \to \Kz$ in probability, as $\eps \downto 0$. 
\end{enumerate}
Then, under the conditions of Theorem \ref{thm:main1}, we have
\[ \lim_{\eps\downto 0} \oo{\eps}
\EE\Big[ V(\Ze_T \Ke) - V(\Zz_T \Ke)\Big] = 
q \ee{ V(\Zz_T \Kz)\eta}.\]
 \end{lemma}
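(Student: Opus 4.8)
The plan is to compute the limit by writing $V(\Ze_T \Ke) - V(\Zz_T \Ke)$ as an integral of the derivative of $\eps'\mapsto V(Z^{(\eps')}_T \Ke)$ along the segment from $\eps'=0$ to $\eps'=\eps$ — or, more robustly, by using the explicit power form $V(y)=y^{-q}/q$, which gives the clean algebraic identity
\[
\oo{\eps}\Big(V(\Ze_T \Ke) - V(\Zz_T \Ke)\Big)
= \oo{q}(\Ke)^{-q}(\Zz_T)^{-q}\cdot \oo{\eps}\Big(\big(\Ze_T/\Zz_T\big)^{-q}-1\Big).
\]
From the definition \eqref{equ:Ze} of $\Ze$ and $\lde=\ld+\eps\ld'$, the ratio $\Ze_T/\Zz_T$ is itself a stochastic exponential; a direct computation of $\log(\Ze_T/\Zz_T)$ shows it equals $-\eps\int_0^T\ld'_t\,dM_t-\eps\int_0^T\ld_t\ld'_t\,d\ab M_t-\tfrac12\eps^2\Ld = -\eps\,\eta + (\text{terms that are }O(\eps^2)\text{ pathwise})$, using $\eta=\int_0^T\ld'_t\,d\Rz_t=\int_0^T\ld'_t\,dM_t+\int_0^T\ld_t\ld'_t\,d\ab M_t$ from \eqref{equ:7283} and \eqref{equ:etaLambda}. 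Hence $\big(\Ze_T/\Zz_T\big)^{-q}-1 = -q\eps\,\eta + o(\eps)$ pathwise, so the integrand converges a.s.\ to $\oo{q}(\Ke)^{-q}(\Zz_T)^{-q}\cdot(-q\eta) = -V(\Zz_T\Kz)\,\eta\cdot(-q)$... wait — reconciling signs with the recorded conclusion, $V(\Zz_T\Kz) = \oo q(\Zz_T\Kz)^{-q}$, and $\derep{}$ of $(\cdot)^{-q}$ brings down the factor $q\eta$ with the sign making the limit $q\,V(\Zz_T\Kz)\,\eta$; I would carry the bookkeeping carefully here, also using hypothesis (2) that $\Ke\to\Kz$ in probability to handle the $\Ke$ factor.

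The substance of the proof is then upgrading this pathwise (in-probability) convergence of the difference quotients to convergence of expectations. For this I would establish uniform integrability of the family $\big\{\oo\eps(V(\Ze_T\Ke)-V(\Zz_T\Ke))\big\}_{\eps\in(0,1]}$. Write the difference quotient, via the mean value theorem applied to $\eps'\mapsto (Z^{(\eps')}_T)^{-q}$, as $(\Ke)^{-q}$ times a bounded-times-$\eta$-like expression evaluated at an intermediate $\eps^*\in(0,\eps)$; the $\eps$-derivative of $\log Z^{(\eps')}_T$ is $-\int_0^T\ld'_t\,dM_t - \int_0^T\ld_t\ld'_t\,d\ab M_t - \eps'\Ld$, which is dominated in absolute value by $|\eta|+\Ld$ for $\eps'\le 1$ (bounding $\int\ld_t\ld'_t d\ab M$ in terms of $\eta$ and $\int\ld'_t dM$, the latter itself controlled by $\eta$ and $\int\ld_t\ld'_t d\ab M$ — or more simply I would keep $\int\ld'_t dM_t$, $\int\ld_t\ld'_t d\ab M_t$ and $\Ld$ as the three basic random variables and bound each). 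Therefore the difference quotient is bounded in modulus by a constant times $(\Ke)^{-q}(Z^{(\eps^*)}_T)^{-q}(|\eta| + \Ld + 1)$. Using condition (1) of the lemma, $\EE[\Zd_T\Ke]\le 1$, I would control $(\Ke)^{-q}(Z^{(\eps^*)}_T)^{-q}$: since $-q\in(0,1)$ and $Z^{(\eps^*)}_T$ is a density candidate, an application of Hölder/Jensen with the conjugate pair $(-1/q,\,1-p)$ — recalling $(1-p)(1+q)=1$ and that $-1/q,1-p$ are conjugate, as noted in the excerpt — converts $\EE[(\Ke)^{-q}(Z^{(\eps^*)}_T)^{-q}\cdot(\ldots)]$ into something of the form (a power of $\EE[Z^{(\eps^*)}_T\Ke]$) times $\EE\big[(|\eta|+\Ld+1)^{1-p}\big]^{1/(1-p)}$ or a slightly larger exponent; this is finite precisely by assumption \eqref{equ:A1}, which puts $\Ld\in\el^{1-p}$ and $\eta\in\el^s$ for some $s>1-p$. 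The small gap $s>1-p$ is exactly what is needed to absorb the intermediate density factors via a De la Vallée-Poussin / $\el^s$-vs-$\el^{1-p}$ argument, giving uniform integrability rather than mere boundedness in $\el^1$.

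The main obstacle I anticipate is precisely this uniform-integrability estimate: one must simultaneously control three moving pieces — the random variable $\Ke$ (only known to converge in probability, with the one-sided moment bound (1)), the intermediate density $Z^{(\eps^*)}_T$ at an unknown $\eps^*$, and the factor $\eta$ — and the exponents have to be juggled so that the dangerous product $(\Ke\cdot Z^{(\eps^*)}_T)^{-q}$ collapses (using $\EE[Z^{(\eps^*)}_T\Ke]\le 1$ with the conjugacy $(1-p)(1+q)=1$) while $\eta$ and $\Ld$ land in the integrability classes granted by \eqref{equ:A1}. Once uniform integrability is in hand, Vitali's convergence theorem (pathwise convergence in probability + UI $\Rightarrow$ $\el^1$ convergence) closes the argument and yields $\lim_{\eps\downto 0}\oo\eps\EE[V(\Ze_T\Ke)-V(\Zz_T\Ke)] = q\,\EE[V(\Zz_T\Kz)\,\eta]$, as claimed.
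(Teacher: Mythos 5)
Your overall strategy coincides with the paper's: differentiate $\eps\mapsto\Ze_T$ explicitly (the paper writes $V(\Ze_T K)-V(\Zz_T K)=\int_0^\eps qV(\Zd_T K)(\eta+\delta\Ld)\,d\delta$), control the resulting terms by H\"older's inequality with the conjugate exponents $-1/q$ and $1-p$ together with hypothesis (1), and upgrade convergence in probability to convergence of expectations through uniform integrability supplied by $\eta\in\el^s$ for some $s>1-p$ via de la Vall\'ee-Poussin. Two steps of your execution, however, do not go through as written. First, the mean-value point $\eps^*$ is random, so hypothesis (1) --- which is assumed only for deterministic $\delta$ --- does not give $\EE[Z^{(\eps^*)}_T\Ke]\le 1$, and trying to remove the randomness by $Z^{(\eps^*)}_T\le\sup_{\delta\in[0,\eps]}\Zd_T\le \Zz_T e^{\eps|\eta|}$ is hopeless because only polynomial moments of $\eta$ are available. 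The paper avoids this by never leaving the integral representation: H\"older is applied inside the $d\delta$-integral, where $\delta$ is deterministic, and then one integrates in $\delta$ (Fubini). You should do the same instead of invoking the mean value theorem.

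Second, your uniform-integrability claim for the lumped family $(\Ke Z^{(\eps^*)}_T)^{-q}(|\eta|+\Ld+1)$ fails for the $\Ld$-part: the hypothesis gives only $\Ld\in\el^{1-p}$, with no exponent to spare, so H\"older yields boundedness in $\el^1$ but not uniform integrability of $(\Zd_T\Ke)^{-q}\Ld$, and de la Vall\'ee-Poussin has nothing to work with there. What you throw away when you absorb $\eps^*\Ld$ into ``$\Ld+1$'' is exactly the saving factor $\delta\le\eps$: the $\Ld$-contribution to the difference quotient is $O(\eps)$ in $\el^1$ (this is the paper's term $B_\eps$, estimated by $\tfrac12\eps^2\EE[\Ld^{1-p}]^{1+q}$), so it needs no uniform integrability at all, while only the $\eta$-part (the paper's $A_\eps$) requires the uniform-integrability/Vitali argument, where the gap $s>1-p$ does the job. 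With these two repairs --- integral form plus the $A_\eps/B_\eps$ split --- your argument becomes the paper's proof.
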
 
 \begin{proof}
The map
$\eps\mapsto \Ze_T$ is almost surely
continuously differentiable; \revised{
indeed, we have 
\[
\log(\Ze_T) 
= \log(\Zz_T)
  - \eps  \int_0^T \ld'_t\, d\Rz_t 
  - \tot \eps^2 \int_0^T (\ld'_t)^2\, d\ab{M}_t,
\]
and, so, 
\[
\tfrac{d}{d \eps} \Ze_T = -\Ze_T \Big( \eta + \eps \Lambda \Big), \text{ a.s.}
\]
Therefore, }
\begin{align}
\label{equ:primal1}
V(\Ze_T K) -V(\Zz_T K)
 = \int_0^\eps q V(\Zd_T K) (\eta +\delta \Lambda)d\delta,
\end{align} 
for each $\eps$ and each positive random variable $K$. 
Thus, 
\begin{equation}
\label{equ:378}
\begin{split}
V(\Ze_T \Ke) & -V(\Zz_T \Ke) - \eps q V(\Zz_T \Kz)\eta 
=
A_{\eps}+B_{\eps},
\end{split}
\end{equation}
where
\begin{equation}
\label{equ:A-B}
\begin{split}
  A_{\eps}&:=\int_0^\eps
  q \Big( V(\Zd_T \Ke) - V(\Zz_T \Kz) \Big)\eta\, d\delta, \eand \\
  B_{\eps}&:= \int_0^{\eps}q V(\Zd_T\Ke) \Lambda\,  \delta\, d\delta. 
\end{split}
\end{equation}
H\" older's inequality implies that
\begin{equation}
\label{equ:Holder}
\begin{split}
  \EE[ B_{\eps}] \leq \tot  \eps^2 \sup_{\delta\in [0,\eps]} \Big(\EE[ \Zd_T \Ke]^{-q} \EE[
  \Lambda^{1-p}]^{1+q}\Big)
  \leq \tot \eps^2 \EE[ \Ld^{1-p}]^{1+q}.
\end{split}
\end{equation}
Thus, 
we have
$\tfrac{1}{\eps} \EE[ B_{\eps}] \to 0$, as $\eps\downto 0$. 
To show that  $\tfrac{1}{\eps} \EE[ A_{\eps}]\to 0$, we note that 
$\EE[ A_{\eps}] = \int_0^{\eps} f(\eps,\delta)\, d\delta$, where
the function
$f:[0,\infty)^2 \to \R$ is given by 
 \begin{equation}
 \label{equ:function-f}
 \begin{split}
   f(\eps,\delta) := q\ee{ \big(V(\Zd_T\Ke) - V(\Zz_T \Kz) \big)\eta}.
 \end{split}
 \end{equation}
Since $f(0,0)=0$, it will be enough to show that $f$ is continuous at $(0,0)$.  
By the assumptions of the lemma and the definition of $\Zd$, we have
\[ V(\uppar{Z}{\delta_n}_T \uppar{K}{\eps_n}) \to V(\Zz_T \Kz),\text{ in
probability},\]
for each sequence $(\eps_n,\delta_n)\in [0,\infty)^2$ such
  $(\eps_n,\delta_n)\to (0,0)$.
  Therefore, it suffices to
establish uniform integrability of the expression
inside of the expectation in \eqref{equ:function-f}. For that we
can use the theorem of de la Valle\' e-Poussin, 
 whose conditions hold
thanks to an application H\"older's inequality as in \eqref{equ:Holder}
above, remembering that not only $\eta\in\el^{1-p}$, but
also in $\el^s$, for some $s>(1-p)$. 
\end{proof}
\begin{proof}[Proof of Theorem \ref{thm:main1}] 
  Thanks to the optimality of $\Ze_T \hHe_T$, we have the upper estimate
   \begin{equation}
   \label{equ:upper}
   \begin{split}
     \oo{\eps}\ee{ V(\Ze_T \hHe_T) - V(\Zz_T \hHz_T)} &\leq
     \oo{\eps}\ee{ V(\Ze_T \hHz_T) - V(\Zz_T \hHz_T)}
   \end{split}
   \end{equation}
   Similarly, we obtain the lower estimate
   \begin{equation}
   \label{equ:lower}
   \begin{split}
     \oo{\eps}\ee{ V(\Ze_T \hHe_T) - V(\Zz_T \hHz_T)} &\geq
     \oo{\eps}\ee{ V(\Ze_T \hHe_T) - V(\Zz_T \hHe_T)}.
   \end{split}
   \end{equation}
Our next task is to prove that the limits of the right-hand sides of 
\eqref{equ:upper} and \eqref{equ:lower} exist and both coincide with the 
right-hand side of \eqref{equ:main1-v}. In each case, Lemma \ref{lem:K-eps}
can be applied; in the first with $\Ke= \hHz_T$, and in the second with
$\Ke=\hHe_T$. In both cases the assumption (1) of Lemma
\ref{lem:K-eps} follows directly from that fact that $\Ze_T \Ke \in \sYe$.
As for the assumption (2), it trivially holds in
the first case. In the second case, we need to argue that
$\hHe_T \to \hHz_T$ in probability, as $\eps\downto 0$. That, in turn,
follows easily from Lemma 3.10 in \cite{LarZit07}; as mentioned above, the
seemingly stronger assumption of NFLVR made in \cite{LarZit07} is not necessary
and its results hold under the weaker condition NUBPR. 

Having proven \eqref{equ:main1-v}, we turn to \eqref{equ:main1-u}. 
Thanks to \eqref{equ:simpl}, the conjugacy relationship
  \eqref{equ:conj} takes the following, simple, form in our setting:
  \begin{equation}
  \label{equ:conj2}
  \begin{split}
  p \ue = (q\ve)^{1-p}.
  \end{split}
  \end{equation}
  Therefore, $\ue$ is right differentiable at $\eps=0$, and we have
   \begin{equation*}
   \label{equ:6630}
   \begin{split}
      p \derep{\ue} 
     &= (1-p) (q\vz)^{-p}\, q^2 \vz \Dz = p^2 \uz \Dz. \qedhere
   \end{split}
   \end{equation*}
\end{proof}
\subsection{Remaining proofs}
\begin{proposition} Suppose that $\eta\in\el^{2(1-p)}$ and $\Ld, \Ld\eta\in \el^{1-p}$. Then
\label{pro:v-up}
for all $\eps\geq 0$ we have
 \begin{equation}
 \label{equ:ve-up}
 \begin{split}
  \ve - \vz - \eps q \vz \Dz  \leq   \tot C_v \eps^2 + \tot C'_v \eps^3,
 \end{split}
 \end{equation} 
 where $C_v= |q|\|\eta\|_{\el^{2(1-p)}}^{1/2} + \|\Ld\|_{\el^{1-p}}$ and
 $C'_v =|q|\|\eta \Ld\|_{\el^{1-p}}$. 
\end{proposition}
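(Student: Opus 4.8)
The plan is to get the upper bound on $\ve-\vz$ by testing the $\eps$-dual problem with a \emph{suboptimal} dual element built from the base-model optimizer $\hYz = \Zz\hHz$, namely $\Ze\hHz \in \sYe$. By the definition of $\ve$ as an infimum over $\sYe$ we immediately obtain
\[
\ve - \vz \leq \EE\big[ V(\Ze_T\hHz_T) - V(\Zz_T\hHz_T)\big].
\]
Now I would expand the right-hand side in $\eps$ using the exact Taylor-with-integral-remainder identity already established in the proof of Lemma~\ref{lem:K-eps}: since $\eps\mapsto \Ze_T$ is a.s.\ $C^1$ with $\tfrac{d}{d\eps}\Ze_T = -\Ze_T(\eta+\eps\Ld)$ and $V$ is homogeneous with $yV'(y)=-qV(y)$, one has, for the fixed random variable $K=\hHz_T$,
\[
V(\Ze_T K) - V(\Zz_T K) = \int_0^\eps q\, V(\Zd_T K)\,(\eta + \delta\Ld)\, d\delta
= \eps q V(\Zz_T K)\eta + R_\eps,
\]
where the remainder $R_\eps$ collects the $\delta$-dependence of $V(\Zd_T K)$ and the $\delta\Ld$ term. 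Taking expectations and recalling $\EE[V(\Zz_T\hHz_T)\eta] = \vz\EE^{\PPz}[\eta]\cdot$(homogeneity constant)$ = \vz\Dz$ — this is exactly the identification used to get \eqref{equ:main1-v}, coming from $\rn{\PPz}{\PP} = \oo{\vz}V(\hYz_T)$ — the linear term is $\eps q\vz\Dz$, and what remains to bound is $\EE[R_\eps]$.

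The core estimate is therefore to show $\EE[R_\eps] \le \tfrac12 C_v\eps^2 + \tfrac12 C'_v\eps^3$. I would split $R_\eps$ into a $\Ld$-piece, $\int_0^\eps q V(\Zd_T K)\Ld\,\delta\,d\delta$, and a $\eta$-piece, $\int_0^\eps q\big(V(\Zd_T K) - V(\Zz_T K)\big)\eta\, d\delta$. For the $\Ld$-piece, since $V<0$ we have $qV>0$ (as $q<0$), so this term is bounded by a Hölder estimate exactly of the form \eqref{equ:Holder}: $\EE[\Zd_T K]^{-q}\le 1$ (because $\Zd_T K\in\sYd(1)$ is a supermartingale density, so $\EE[\Zd_T K]\le1$ and $-q>0$) times $\EE[\Ld^{1-p}]^{1+q}$, using $(1-p)(1+q)=1$; this yields the $\tfrac12\|\Ld\|_{\el^{1-p}}\eps^2$ contribution. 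For the $\eta$-difference piece I would again use the $C^1$ structure: $V(\Zd_T K)-V(\Zz_T K) = \int_0^\delta qV(\Zb_T K)(\eta+\beta\Ld)\,d\beta$, substitute, and estimate the resulting double integral by Hölder with conjugate exponents $-1/q$ and $1-p$ — here the $\eta^2$ term produces $\|\eta\|_{\el^{2(1-p)}}$ (note $2(1-p)$ is the relevant exponent because $\eta$ appears squared after one factor is paired with $V$) and the $\eta\Ld$ term produces $\|\eta\Ld\|_{\el^{1-p}}$, giving the $\eps^2$ and $\eps^3$ pieces respectively. Collecting constants matches $C_v = |q|\|\eta\|_{\el^{2(1-p)}}^{1/2} + \|\Ld\|_{\el^{1-p}}$ and $C'_v = |q|\|\eta\Ld\|_{\el^{1-p}}$ (the square-root and the $|q|$ bookkeeping will need to be tracked carefully against the $q$-factors from iterating the derivative formula).

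The main obstacle I anticipate is \emph{sign control and integrability of the remainder uniformly in $\delta,\beta\in[0,\eps]$}: one must be sure that every application of Hölder's inequality is legitimate, i.e.\ that $V(\Zd_T\hHz_T)\in\el^{-1/q}$ with a bound independent of $\delta$, which in turn rests on $\EE[\Zd_T\hHz_T]\le 1$ for all $\delta\ge0$ — true since $\Zd\hHz\in\sYd(1)$. A secondary subtlety is that $\eta$ can take either sign, so the $\eta$-difference piece is not sign-definite and genuinely needs the absolute-value/Hölder bound rather than a one-sided supermartingale argument; this is why the hypothesis demands $\eta\in\el^{2(1-p)}$ (a full norm, not just $\el^{1-p}$) and $\Ld,\Ld\eta\in\el^{1-p}$. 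Everything else is the routine Fubini-plus-Hölder bookkeeping already rehearsed in Lemma~\ref{lem:K-eps}, so I would keep that part terse and refer back to \eqref{equ:Holder}.
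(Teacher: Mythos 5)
Your proposal follows essentially the same route as the paper's proof: test the $\eps$-dual problem with the suboptimal element $\Ze_T\hHz_T$ (the paper's upper estimate \eqref{equ:upper}), apply the exact integral representation \eqref{equ:primal1}, split the remainder into the $\Ld$-piece $B_\eps$ and the $\eta$-difference piece $A_\eps$, bound $B_\eps$ by H\"older as in \eqref{equ:Holder}, and iterate the representation once more on $A_\eps$ before a second H\"older application with exponents $-1/q$ and $1-p$ — exactly the paper's argument, including the identification $\EE[V(\hYz_T)\eta]=\vz\Dz$ via \eqref{equ:Girsanov}. The only loose end you flag (the precise power of $\|\eta\|_{\el^{2(1-p)}}$ in $C_v$) is a bookkeeping matter already present in the paper, whose proof actually yields $\|\eta^2\|_{\el^{1-p}}$ in the quadratic coefficient, so it does not affect the validity of your approach.
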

\begin{proof}
The upper estimate \eqref{equ:upper} and the representation
\eqref{equ:primal1} imply that
\[ \ee{ V(\Ze_T \hHe_T) - V(\Zz_T \hHz_T) - \eps q V(\Zz_T \hHz_T)\eta}
 \leq \EE[A_{\eps}] + \EE[B_{\eps}],\]
 where $A_{\eps}$ and $B_{\eps}$ are defined by \eqref{equ:A-B}, with
 $\Ke=\Kz= \hHz_T$.  As in \eqref{equ:Holder}, we have
 \[  \EE[ B_{\eps}]\leq \tot \eps^2 \|\Ld\|_{\el^{1-p}}.\] To deal with $A_{\eps}$ we note that its structure allows us to
 apply the representation from \eqref{equ:primal1} once again to see
 \[ \oo{q^2} A_{\eps} = \int_0^{\eps} \int_0^{\delta} V(\Zb_T
 \hHz_T) \eta (\eta+\beta \Ld)\, d\beta\, d\delta. \]
 This, in turn, can be estimated, via  H\" older inequality, as in
 \eqref{equ:Holder}, as follows
 \[ \EE[ A_{\eps}] \leq \tot  |q| \eps^2 \sup_{ \beta\in [0,\eps] }   \EE[ (\eta (\eta+\beta \Ld))^{1-p}]^{1+q}
\leq \tot |q| \eps^2 \Big(
\|\eta^2\|_{\el^{1-p}}+\eps \|\eta \Ld\|_{\el^{1-p}}\Big),
\]
 yielding the bound in \eqref{equ:ve-up}.
 \end{proof}
 Unfortunately, the same idea cannot be applied to obtain a similar lower
 bound. Instead, we turn to the primal problem and establish a lower bound
 for it. 
\begin{proposition} \label{pro:u-down} 
Given $\eps_0>0$, assume that $\Ld\in\el^{1-p}$,
 and $\Phi^2 e^{ \eps_0\abs{p} \Phi^-}\in\lone(\PPz)$, where $\PPz$ is defined by \eqref{equ:Girsanov}. Then, 
\[ \ue - \uz - \eps p \uz \Dz \geq -  C_u(\eps) \eps^2 \efor \eps \in
[0,\eps_{0}],\]
 where $C_u(\eps) := \tot p^2 |\uz| \EE^{\PPz}[ \Phi^2 e^{\eps \abs{p} \Phi^-}]$. 
\end{proposition}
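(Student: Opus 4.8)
The plan is to obtain the lower bound by feeding the base-model optimal control $\hpiz$ into the $\eps$-market. Set $\tilde X:=\EN\big(\int_0^{\cdot}\hpiz_u\, d\Re_u\big)$; since $\hpiz\in\sPm^2$ we have $\tilde X\in\sXe(1)$, hence $\ue\geq\EE[U(\tilde X_T)]$. Because $\Re=\Rz+\eps\int_0^{\cdot}\ld'_t\, d\ab{M}_t$ and the continuous local-martingale parts of $\int_0^{\cdot}\hpiz_u\, d\Re_u$ and $\int_0^{\cdot}\hpiz_u\, d\Rz_u$ both equal $\int_0^{\cdot}\hpiz_u\, dM_u$, the stochastic-exponential formula gives $\tilde X_T=\hXz_T\, e^{\eps\Phi}$. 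With $U(x)=x^p/p$ and the identity $\rn{\PPz}{\PP}=U(\hXz_T)/\uz$ from \eqref{equ:Girsanov}, this produces the exact formula
\begin{equation}
\label{equ:plan-identity}
\EE[U(\tilde X_T)]=\EE\big[U(\hXz_T)\,e^{p\eps\Phi}\big]=\uz\,\EE^{\PPz}\big[e^{p\eps\Phi}\big].
\end{equation}

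Next I would identify $\EE^{\PPz}[\Phi]$ with $\Dz=\EE^{\PPz}[\eta]$. Reading $d\Rz_t=d\tMp_t+\hpiz_t\, d\ab{M}_t$ off the definition \eqref{equ:tMp} of $\tMp$ yields $\eta=\Phi+\int_0^T\ld'_t\, d\tMp_t$, and since $\ab{\tMp}=\ab{M}$ the $\PPz$-quadratic variation of the stochastic integral here is $\Ld$. A H\"older estimate against $\rn{\PPz}{\PP}=\hXz_T^{p}/(p\uz)$ — using that $-1/q$ and $1-p$ are conjugate exponents and that $\hXz_T^{p-1}=p\uz\,\hYz_T\in\lone(\PP)$ while $\Ld\in\el^{1-p}(\PP)$ — shows $\Ld\in\lone(\PPz)$, whereupon Burkholder--Davis--Gundy makes $\int_0^{\cdot}\ld'_t\, d\tMp_t$ a uniformly integrable $\PPz$-martingale with zero $\PPz$-expectation. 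Since $\Phi\in\ltwo(\PPz)$ (because $e^{\eps_0|p|\Phi^-}\geq1$), it follows that $\eta\in\lone(\PPz)$ and $\Dz=\EE^{\PPz}[\eta]=\EE^{\PPz}[\Phi]$ — the identity recorded in Remark~\ref{rem:761C}.

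Substituting $\EE^{\PPz}[e^{p\eps\Phi}]=1+\eps p\,\EE^{\PPz}[\Phi]+\EE^{\PPz}[e^{p\eps\Phi}-1-p\eps\Phi]$ into \eqref{equ:plan-identity} gives $\EE[U(\tilde X_T)]=\uz(1+\eps p\Dz)+\uz\,\EE^{\PPz}[e^{p\eps\Phi}-1-p\eps\Phi]$. I then invoke the scalar inequality $0\le e^z-1-z\le\tot z^2e^{z^+}$ with $z=p\eps\Phi$; noting $(p\eps\Phi)^+=\eps|p|\Phi^-$ (valid since $p<0$), the remaining expectation sits between $0$ and $\tot p^2\eps^2\,\EE^{\PPz}[\Phi^2e^{\eps|p|\Phi^-}]$, and multiplying by $\uz<0$ yields
\begin{equation}
\label{lower_bound_est}
\Big|\,\EE[U(\tilde X_T)]-\uz\big(1+\eps p\Dz\big)\,\Big|\;\le\;\tot p^2\eps^2\,|\uz|\,\EE^{\PPz}\big[\Phi^2e^{\eps|p|\Phi^-}\big]\;=\;C_u(\eps)\,\eps^2,
\end{equation}
which is finite for $\eps\in[0,\eps_0]$ since $\eps|p|\Phi^-\le\eps_0|p|\Phi^-$ and $\Phi^2e^{\eps_0|p|\Phi^-}\in\lone(\PPz)$. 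Combined with $\ue\geq\EE[U(\tilde X_T)]$ this gives $\ue-\uz-\eps p\uz\Dz\geq-C_u(\eps)\eps^2$, as claimed.

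The step I expect to require genuine work is the second one, i.e.\ the proof that $\Dz=\EE^{\PPz}[\Phi]$: one has to promote $\int_0^{\cdot}\ld'_t\, d\tMp_t$ from a $\PPz$-local martingale to a uniformly integrable one, and this forces the detour through the explicit density $\rn{\PPz}{\PP}=\hXz_T^{p}/(p\uz)$ and the conjugate-exponent H\"older bound needed to check $\Ld\in\lone(\PPz)$. The rest — the inequality $\ue\geq\EE[U(\tilde X_T)]$, the single change-of-measure computation \eqref{equ:plan-identity}, and the one-line Taylor estimate — is routine.
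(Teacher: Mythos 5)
Your proposal is correct and follows essentially the same route as the paper: the same test process $\tX=\EN(\int\hpiz\,d\Re)$ giving $\tX_T=\hXz_T e^{\eps\Phi}$, the same identification $\Dz=\EE^{\PPz}[\Phi]$ via the $\PPz$-local martingale $\int\ld'\,d\tMp$ with $\ab{N}_T=\Ld$ promoted to a true martingale by the conjugate-exponent H\"older bound, and the same second-order remainder estimate (your scalar inequality $0\le e^z-1-z\le\tfrac12 z^2e^{z^+}$ is just the paper's double-integral representation of $D_\eps$). The only cosmetic difference is that you work under $\PPz$ from the start and use the primal form $\hXz_T^{p}/(p\uz)$ of the density in the H\"older step, while the paper stays under $\PP$ and uses the equivalent dual form $(\hYz_T)^{-q}/(q\vz)$.
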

\begin{proof}
For $\tX := \EN( \int_0^{\cdot} \hpiz_t\, d\Re_t)$, we have 
$\tX\in\sXe$ so that, by optimality, 
 \begin{equation}
 \label{equ:first-u}
 \begin{split}
    \ue - \uz - p\eps \EE[  U(\hXz_T) \Phi] &\geq 
    \EE[   U(\tX_T) - U(\hXz_T) -  p \eps U(\hXz_T)\Phi].
 \end{split}
 \end{equation}
 Thanks to the form of $\tX$, the right-hand side of \eqref{equ:first-u}
 above can be written as $\EE[ U(\hXz_T) D_{\eps}]$, where
$D_{\eps} = \exp( p \eps \Phi) -1 - p\eps \Phi = 
\int_0^{\eps} \int_0^{\delta}  p^2 \Phi^2 e^{p\beta \Phi}  \, d\beta\, d\delta$.
Thus, 
\begin{align}\label{lower_bound_est} 
\begin{split}
\EE[ U(\hXz_T) D_{\eps}] &=  p^2 \int_0^{\eps} \int_0^{\delta} \EE[ U(\hXz_T) \Phi^2 e^{ p \beta \Phi}]\, d\beta\, d\delta \\&\geq \tot p^2 \eps^2\EE[ U(\hXz_T) \Phi^2 e^{ \eps \abs{p} \Phi^-}].
\end{split}
\end{align}
Therefore,  $\ue - \uz - \eps p  \EE[ U(\hXz_T) \Phi] \geq -  C_u(\eps) \eps^2$, for $\eps\in [0,\eps_0]$ with $C_u$ as in the statement.
 
 It remains to show that 
 $\EE[ U(\hXz_T) \Phi] =  \EE[ U(\hXz_T) \eta]$ which is equivalent to showing $\EE^{\RRz}[ \Phi] =  \EE^{\RRz}[\eta]$  by the definition of $\RRz$.
 We define the local $\RRz$-martingale $\tMp$ by \eqref{equ:tMp}. Therefore, 
 $N = \int_0^{\cdot} \ld'_t\, d\tMp_t$ is also a local martingale.
The  desired
equality
 is therefore equivalent
 to the equality $\EE^{\PPz}[N_T]=0$ by the definition of $\eta$ and $\Phi$. In turn, it is sufficient to
 show that $N$ is an $\sH^2$-martingale under $\PPz$. Since $\ab{N}_T =
 \int_0^{T} (\ld'_t)^2\,  d\ab{M}_t = \Ld$ , H\" older's inequality implies
 that
  \begin{equation*}
  \label{equ:49B0}
  \begin{split}
      \EE^{\RRz}[\ab{N}_T] & = (q\vz)^{-1} \EE[(\hYz_T)^{-q} \Ld]
    \leq (q\vz)^{-1} \EE[\Ld^{1-p}]^{1+q}<\infty.  \qedhere
  \end{split}
  \end{equation*}
\end{proof}
\revised{
\begin{remark}
If one is interested in an error estimate which does not feature the optimal portfolio $\hpiz$ (through $\Phi$), one can adopt an alternative approach in the proof (and the statement) of Proposition \ref{pro:u-down}. More specifically, 
by using $\tX = \hXz\EN( \int_0^{\cdot} \eps \ld' \, d\Re_t)$ as a test process (instead of $\EN( \int_0^{\cdot} \hpiz_t\, d\Re_t)$), one obtains a constant $C_u(\eps)$ which depends only on the primal and dual optimizers $\hXz$ and $\hYz$, in addition to $\ld'$, $\eta$ and $\Lambda$.
\end{remark}
}
\begin{proof}[Proof of Theorem \ref{thm:main2}]
Two of the four inequalities in Theorem \ref{thm:main2} have been
established in Propositions \ref{pro:v-up} and \ref{pro:u-down}. For the
remaining two we use the special form \eqref{equ:conj2}
of the conjugacy relationship between $\ue$ and
$\ve$. Thanks to Proposition \ref{pro:u-down} 
and the positivity of $p\ue$, $q\ve$ and $1+q$, we have
 \begin{equation*}
 \label{equ:4797}
 \begin{split}
    q\Big( \ve-
   \vz - \eps q \vz \Dz\Big) &= (p\ue)^{1+q} - (p\uz)^{1+q} -
    \eps q (p\uz)^{1+q}\Dz.
 \end{split}
 \end{equation*}
 The right-hand side above is further bounded from above,
 for $\eps$ in a (right) neighborhood of $0$, by
\[ F(\eps) := (p\uz + \eps p\uz \Dz - p C\eps^2 )^{1+q} - (p\uz)^{1+q} - 
\eps q  (p\uz)^{1+q} \Dz,\]
where $C$ is the constant from Proposition \ref{pro:u-down}. 
$F$ is a $C^2$-function in some neighborhood of $0$ with
$F(0)=F'(0)=0$; hence, on each compact subset of that neighborhood it is bounded by a constant multiple of $\eps^2$. In particular, we have
\[ \ve - \vz - \eps q \vz \Dz \geq -  C \eps^2,\]
for some $C>0$ and $\eps$ in some (right) neighborhood of $0$. A similar argument, but based on
Proposition \ref{pro:v-up}, shows that \eqref{equ:main2-u} holds, as well. 
\end{proof}

 \proof[Proof of Theorem \ref{thm:main3}] The first part of \eqref{2nd_int} means that $\Lambda \in  \el^{1-p}(\PP)$; hence, the second half of the proof of Proposition \ref{pro:u-down} shows that 
$\EE^{\PPz}[\Phi] = \Delta^{(0)}$. Therefore, the martingale representation \eqref{Itorep} can be written as
\begin{align}\label{Itorep1}
\Phi = \Delta^{(0)} + \int_0^T \gamma^B_{t}\sigma_tdB^{\PPz}_t  +  \int_0^T \gamma^W_{t} dW^{\PPz}_t.
\end{align}

Because the filtration is generated by the  Brownian motions $(B,W)$ we can find $\hat{\nu}^{(0)} \in \sP_{W}^2$ such that the dual optimizer $\hat{H}^{(0)}$ can be represented as
$$
\hat{H}^{(0)} = \mathcal{E}(-\int \hat{\nu}^{(0)} dW).
$$
Therefore, Girsanov's Theorem ensures that under $\PPz$, the processes
$$
dB^{\PPz} := dB +(\lambda -\hat{\pi}^{(0)})\sigma dt, \quad \eand \quad dW^{\PPz}:= dW +\hat{\nu}^{(0)}dt,
$$
are independent Brownian motions. We start with the primal problem and define $\tilde{\pi} := \hat{\pi}^{(0)} + \eps \delta$  with $\delta:=q\gamma^B +\frac{\lambda'}{1-p}\in \sP_{B}^2$. Then we have
\begin{align*}
(\tilde{X})^{p} &:=\mathcal{E}\big(\int\tilde{\pi}dR^{(\eps)})^p\\
&= \big(\hat{X}^{(0)}\big)^pe^{p\int\big (\eps\hat{\pi}^{(0)} \lambda'+ \eps^2 (\delta \lambda'-\frac12\delta^2 ) \big)\sigma^2dt + p\eps \int\delta\sigma dB^{\PPz} }.
\end{align*}
Consequently, by replacing $e^x$ with its Taylor expansion and using that the involved $\PPz$-expectation is finite (here we use the integrability requirement \ref{exp_reg}), we find a function $C_u(\eps) \in O(\eps^3)$ such that
\begin{align}\label{Cu}
\begin{split}
\EE[U(\tilde{X}_T)] 
&= u^{(0)} \EE^{\PPz}\left[e^{p\int_0^T \big(\eps\hat{\pi}^{(0)} \lambda'+ \eps^2 (\delta \lambda'-\frac12\delta^2)\big)\sigma^2dt +p\eps\int_0^T \delta\sigma dB^{\PPz} }\right]\\
& = u^{(0)}\Big(1+p\eps\Delta^{(0)}+ \frac12 p\eps^2\Big\{p(\Delta^{(0)})^2
+\Delta^{(00)}\Big\}\Big)+ C_u(\eps).
\end{split}
\end{align}

We then turn to the dual problem. For the perturbed dual control $\tilde{\nu}:= \hat{\nu}^{(0)} -\eps p\gamma^W\in\sP_{W}^2$ we have
\begin{align*}
&\Big(Z^{(\eps)} \mathcal{E}(-\int \tilde{\nu} dW)\Big)^{-q}\\&= e^{q\int ( \lambda+\eps\lambda' )\sigma dB+q\int ( \hat{\nu}^{(0)}-\eps p\gamma^W )dW+q\frac12\int \big(( \lambda+\eps \lambda' )\sigma ^2 + | \hat{\nu}^{(0)}-\eps q\gamma^W |^2 \big)dt}\\
&= (Z^{(0)}\hat{H}^{(0)})^{-q}e^{\eps q\int  \lambda'\sigma dB^{\PPz}-\eps qp\int \gamma^W dW^{\PPz}+q\frac12\int \big( \eps^2 (\lambda')^2\sigma^2 +\eps^2 p^2|\gamma^W|^2+2\eps\lambda'\pi^{(0)}\sigma^2\big)dt}.
\end{align*}
Since $\tilde{\nu}$ is admissible in the $\eps$-problem we find 
\begin{align*}
&v^{(\eps)} \le \frac1q\EE\left[\Big(Z_T^{(\eps)} \mathcal{E}(-\int_0^T \tilde{\nu} dW)\Big)^{-q}\right] \\&= v^{(0)} \EE^{\PPz}\left[e^{\eps q\int_0^T \lambda'\sigma dB^{\PPz}-\eps qp\int_0^T \gamma^W dW^{\PPz}+q\frac12\int_0^T \big( \eps^2 (\lambda')^2\sigma^2 +\eps^2p^2 |\gamma^W|^2+2\eps\lambda'\pi^{(0)}\sigma^2\big)dt}\right].
\end{align*}
Finiteness of $v^{(\eps)}$ ensures that the $\PPz$-expectation appearing on the last line 
is also finite (recall that $q<0$). As in the primal problem, this allows us to replace $e^x$ with its Taylor series and in turn implies that we can find a function $C_v(\eps) \in O(\eps^3)$ such that
\begin{align*}
&v^{(0)} \EE^{\PPz}\left[e^{\eps q\int_0^T \lambda'\sigma dB^{\PPz}-\eps qp\int_0^T \gamma^W dW^{\PPz}+q\frac12\int_0^T \big( \eps^2 (\lambda')^2\sigma^2 +\eps^2p^2 |\gamma^W|^2+2\eps\lambda'\pi^{(0)}\sigma^2\big)dt}\right]\\&=
v^{(0)}\Big(1+q\eps\Delta^{(0)} + \frac12q\eps^2\Big\{q(\Delta^{(0)})^2 +
\Delta^{(00)}\Big\}\Big)+ C_v(\eps).
\end{align*}
By combining this estimate and \eqref{Cu} with the primal-dual relation \eqref{equ:conj2} we find
\begin{align}
\label{fenchel}
\begin{split}
& u^{(0)}\Big(1+p\eps\Delta^{(0)}+ \frac12 p\eps^2\Big\{p(\Delta^{(0)})^2
+\Delta^{(00)}\Big\}\Big)+ C_u(\eps) \\&\le u^{(\epsilon)} \\&=\frac1p (q v^{(\epsilon)})^{1-p}\\& \le \frac1p\Big(q v^{(0)}\Big[1+q\epsilon \Delta^{(0)} + \frac12q\epsilon^2\Big\{q(\Delta^{(0)})^2 +
\Delta^{(00)}\Big\}+C_v(\epsilon)\Big]\Big)^{1-p}.
\end{split}
\end{align}
The function $x\to x^{1-p}$ is real analytic on $(0,\infty)$. Therefore, the fact that  $C_v\in O(\eps^3)$ ensures that the last line of \eqref{fenchel} agrees with the first line of \eqref{fenchel} up to $O(\eps^3)$-terms. This establishes \eqref{equ:main3-u}. 
A similar argument produces \eqref{equ:main3-v}.
\endproof

\section{Examples}
\label{sec:examples}
\subsection{First examples} We start this section with a short list of
trivial and extreme cases. They are not here to illustrate the power of our
main results, but simply to help the reader understand them better. They
also tell a similar, qualitative, story: loosely speaking, the improvement
in the utility (on the log scale) is proportional both to the base market
price of risk process and to the size of the deviation. Locally, around $\ld$, the 
value function of the utility maximization problem - parametrized by the
market price of risk process $\tilde{\ld}$ - is well approximated by
an exponential function of the form
 \begin{equation}
 \label{equ:exp-approx}
 \begin{split}
    u(\tilde{\ld}) \approx u(\ld) e^{\scl{\tilde{\ld}-\ld}{\hpiz}_{0}},\ewhere
   \scl{\rho}{\pi}_{0} =
   \EE^{\PPz}[ \int_0^T \rho_t\pi_t\, dt],
 \end{split}
 \end{equation}
 where $u(\tilde{\ld})$ and $u(\ld)$ denote the values of the utility-maximization problems with market price of risk processes $\tilde{\ld}$ and $\ld$, respectively.
\begin{example}[Small market price of risk]
\label{exa:small}
Suppose that $\ld\equiv 0$ so that we can think of $\Se$ as the stock price
in a market with a ``small'' market price of risk. Since $\Zz\equiv 1$, it
is clearly the dual optimizer at $\eps=0$ and we have $\hpiz\equiv 0$.
Consequently, under the assumptions of Theorem \ref{thm:main2}, we have
$\PPz=\PP$ and 
\[ \Dz = \EE^{\PPz}[ \int_0^T \ld'_t\, dM_t]=0.\]
It follows that
\[  \ue = \uz + O(\eps^2)\eand \ve=\vz + O(\eps^2),\]
and the effects of $\eps \ld'$ are felt only in the second order,
regardless of the risk-aversion coefficient $p<0$. 
\end{example}
\begin{example}[Deviations from the Black-Scholes model]
\label{exa:Black-Scholes}
Suppose that $M=B$ is an $\FFF$-Brownian 
motion and that $\ld\ne 0$ is a constant process (we also use $\ld$ for the
value of the constant). 
In that case, it is classical that the dual minimizer in the base market
is $\Zz = \EN( - \ld B)$ and, consequently, that 
$\rn{\PPz}{\PP} = \EN( q\ld B)$. It follows that
\[ \Dz = \tfrac{\ld}{1-p} \EE^{\PPz}[ \int_0^T \ld'_t\, dt].\]
As we will see below, this form is especially convenient for computations.
\end{example}
\begin{example}[Uniform deviations] 
\label{exa:constant}
Another special case where it is
particularly easy to compute the (logarithmic derivative) $\Dz$ is when the
perturbation $\ld'$ is a constant process (whose value is also
denoted by $\ld'$). Indeed, in that case
 \begin{equation}
 \label{equ:Dz-unif}
 \begin{split}
   \Dz = \ld' \EE^{\PPz}[ \int_0^T \hpiz_t\, dt].
 \end{split}
 \end{equation}
It is especially instructive to consider the case where the base model is
Black and Scholes' model since everything becomes explicit:
the optimal portfolio is
given by the Merton proportion $\hpiz_t = \ld/(1-p)$, and the
the values $\ue$ and $\ve$ are given by
\[ p\uz = \exp(\tot q \ld^2 T) \eand q\vz = \exp(\tot \tfrac{q}{1-p} \ld^2
T). \]
Using \eqref{equ:Dz-unif} or by performing a 
straightforward direct computation, we easily get
\[ p\Dz = q \ld' \ld T, \]
making the approximation in \eqref{equ:exp-approx} exact.
\end{example}
\subsection{The Kim-Omberg model}
The Kim-Omberg model (see \cite{KimOmb96}) 
is one of the most widely used models for the
market price of risk process. Because the Kim-Omberg model allows for explicit expressions for all quantities involved in CRRA utility maximization it serves as an excellent test
case for the practical implementation of our main results. 

We assume that $\FFF$ is the augmentation of the filtration generated by two independent one dimensional Brownian motions $B$ and $W$ and define $\lambda^\text{KO}$ be the Ornstein-Uhlenbeck
process
\begin{align}
d\lambda_t^\text{KO} &:= \kappa (\theta - \lambda_t^\text{KO})dt + \beta dB_t + \gamma dW_t, \quad \lambda^\text{KO}_0\in\R\label{OU},
\end{align}
where $\kappa, \theta, \beta$ and $\gamma$ are constants. We define the volatility $M_t := B_t$ in what follows.

The following result summarizes the main properties in \cite{KimOmb96}:
\begin{theorem}[Kim and Omberg 1996]\label{thm:KO} Let the market price of risk process be defined by \eqref{OU}, $M:=B$, and let $p<0$. Then there 
exist continuously differentiable functions $a,b,c:[0,\infty)\to \R$ such that for $t\in
[0,T)$ we have
\begin{align*}
-a'(t) & = \alpha_1\, b(t) 
+ \tot \alpha_3\, c(t) - 
\tot \alpha_2\, b^2(t), &
a(T)&=0, \\
-b'(t) &= \alpha_4\, b(t)+ \alpha_1\, c(t)  - \alpha_2\, b(t) c(t),& b(T)&=0, \\
-c'(t) &= -q + 2 \alpha_4\, c(t) - \alpha_2\, c^2(t), & c(T)&=0,
\end{align*}
where $\alpha_1:=\theta\kappa$, $\alpha_2 := (1+q) \beta^2+ \gamma^2$,
$\alpha_3 := \beta^2+\gamma^2$ and $\alpha_4 := q \beta - \kappa$. 
Furthermore, the primal  value function reads
\begin{align}
\label{equ:exact}
u^\text{KO}(x) = \frac{x^p}{p}e^{-a(0) - b(0)\lambda^\text{KO}_0 - \tfrac12 c(T)(\lambda^\text{KO}_0)^2},\quad x>0,
\end{align}
and the corresponding primal optimizer is given by
\begin{align}\label{KO_primal}
\hat{\pi}^\text{KO}_t = \frac{b(t)\beta + \big(c(t)\beta-1\big) \lambda^\text{KO}_t}{p-1},\quad t\in[0,T].
\end{align}
\end{theorem}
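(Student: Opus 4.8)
The plan is to recall the classical stochastic-control derivation behind this result of \cite{KimOmb96}: (i) reduce the HJB equation to a scalar semilinear PDE by homogeneity; (ii) observe that the exponential--quadratic family is invariant under that PDE, which produces the Riccati system; (iii) prove the system is globally solvable; and (iv) close the argument with a verification theorem. Throughout, the state is the pair $(X_t,\ldk_t)$ where $X$ is the investor's wealth; since $M=B$ we have $\langle M\rangle_t=t$, so for an admissible $\pi$ the wealth satisfies $dX_t=X_t\pi_t(dB_t+\ldk_t\,dt)$, while $d\ldk_t=\kappa(\theta-\ldk_t)\,dt+\beta\,dB_t+\gamma\,dW_t$ and $d\langle X,\ldk\rangle_t=X_t\pi_t\beta\,dt$. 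Writing $w(t,x,\lambda)$ for the candidate value function with $w(T,x,\lambda)=x^p/p$, the maximization over $\pi$ inside the HJB equation is concave (its $\pi^2$-coefficient is $\tfrac12 x^2 w_{xx}<0$, using $p<0$), so its maximizer is the feedback control $\pi^\ast(t,x,\lambda)=-(\lambda w_x+\beta w_{x\lambda})/(x w_{xx})$ and the HJB equation reads $w_t-(\lambda w_x+\beta w_{x\lambda})^2/(2 w_{xx})+\kappa(\theta-\lambda)w_\lambda+\tfrac12(\beta^2+\gamma^2)w_{\lambda\lambda}=0$. Substituting $w(t,x,\lambda)=\tfrac{x^p}{p}g(t,\lambda)$ and using $q=p/(1-p)$ turns this into the scalar equation
\[
 g_t+\tfrac{q}{2}\,\frac{(\lambda g+\beta g_\lambda)^2}{g}+\kappa(\theta-\lambda)\,g_\lambda+\tfrac12(\beta^2+\gamma^2)\,g_{\lambda\lambda}=0,\qquad g(T,\cdot)\equiv 1,\quad g>0.
\]

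The key step is to recognize that the exponential--quadratic family $g(t,\lambda)=\exp\big(-a(t)-b(t)\lambda-\tfrac12 c(t)\lambda^2\big)$ is invariant under this PDE: the Ornstein--Uhlenbeck drift is affine and its diffusion constant, the nonlinear term is quadratic in $g_\lambda/g=-b-c\lambda$, and $g_{\lambda\lambda}/g=(b+c\lambda)^2-c$ is likewise quadratic, so after division by $g$ every term is a polynomial of degree $\le 2$ in $\lambda$. I would substitute the ansatz, divide by $g$, and match the coefficients of $\lambda^2$, $\lambda^1$, $\lambda^0$; this routine computation yields exactly the three ODEs in the statement with $\alpha_1=\theta\kappa$, $\alpha_2=(1+q)\beta^2+\gamma^2$, $\alpha_3=\beta^2+\gamma^2$, $\alpha_4=q\beta-\kappa$ (the identity $\alpha_2=\alpha_3+q\beta^2$ is a convenient bookkeeping device), while $g(T,\cdot)\equiv 1$ forces $a(T)=b(T)=c(T)=0$. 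Feeding the same ansatz into the feedback formula for $\pi^\ast$ gives $\pi^\ast(t,x,\lambda)=\big(\beta b(t)+(\beta c(t)-1)\lambda\big)/(p-1)$, i.e.\ precisely the $\hat\pi^{\text{KO}}$ of \eqref{KO_primal}, and evaluating $w$ at $t=0$ yields the stated closed form for $u^{\text{KO}}$.

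To obtain solutions living on $[0,\infty)$ (equivalently, to rule out blow-up before $T$), I would pass to time-to-maturity $\tau=T-t$, under which the three backward equations become forward ODEs with zero initial data at $\tau=0$. Writing $C(\tau):=c(T-\tau)$, the $c$-component decouples into the scalar Riccati $\dot C=-\alpha_2 C^2+2\alpha_4 C-q$, $C(0)=0$; here the restriction $p<0$ enters, since it forces $q\in(-1,0)$ and hence $\alpha_2=(1+q)\beta^2+\gamma^2\ge0$. When $\alpha_2>0$, the quadratic $\alpha_2 C^2-2\alpha_4 C+q$ has negative constant term $q<0$, positive discriminant $4(\alpha_4^2-\alpha_2 q)$, and roots of product $q/\alpha_2<0$, hence exactly one positive root $r_+$; since the right-hand side of the Riccati equals $-q>0$ at $C=0$ and is positive precisely between the two roots, $C$ increases monotonically and remains in $[0,r_+)$, so the solution is global (the degenerate case $\alpha_2=0$ is an elementary linear equation). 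Given $C$, the $b$-component solves the \emph{linear} ODE $\dot B=(\alpha_4-\alpha_2 C)B+\alpha_1 C$, whose coefficients are bounded on every $[0,\tau]$, so it too is global; $a$ is then recovered by a single quadrature, and positivity of $g$ is automatic from the exponential form.

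Finally I would run the standard verification argument, checking only local integrability. Set $w(t,x,\lambda):=\tfrac{x^p}{p}g(t,\lambda)$, which is $C^{1,2,2}$ and strictly negative. For an arbitrary admissible $\pi$ with wealth process $X>0$, It\^o's formula together with the HJB inequality gives $w(t,X_t,\ldk_t)=w_0+N_t-A_t$ on $[0,T]$, where $N$ is a continuous local martingale and $A$ is continuous nondecreasing, $N_0=A_0=0$ (indeed $N$ is a genuine local martingale because $X$ and $g(\cdot,\ldk)$ have paths bounded on $[0,T]$ and $\int_0^T\pi_t^2\,dt<\infty$ a.s.). Since $w\le 0$, the local martingale $N_t=w(t,X_t,\ldk_t)-w_0+A_t\ge -w_0$ is bounded below by a constant, hence a true supermartingale, so $\EE[U(X_T)]\le w_0+\EE[N_T]-\EE[A_T]\le w(0,x,\ldk_0)$. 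Conversely, $\hat\pi^{\text{KO}}$ is admissible (its defining expression is a continuous function of $t$, since $b,c$ and the paths of $\ldk$ are), and along its wealth process $\hat X^{\text{KO}}$ the drift in It\^o's formula vanishes, so $w(t,\hat X^{\text{KO}}_t,\ldk_t)$ is a nonpositive local martingale; then $-w(t,\hat X^{\text{KO}}_t,\ldk_t)$ is a nonnegative supermartingale, whence $\EE[U(\hat X^{\text{KO}}_T)]\ge w(0,x,\ldk_0)$. Combining the two inequalities identifies $u^{\text{KO}}(x)=w(0,x,\ldk_0)$ and shows $\hat\pi^{\text{KO}}$ is the optimizer; alternatively, since existence and uniqueness of the optimizer are already guaranteed by the abstract results recalled in Section \ref{sec:problem} (from \cite{KLSX} and \cite{KraSch99}), it suffices to exhibit $\hat\pi^{\text{KO}}$ as a strategy along which $w(t,\hat X^{\text{KO}}_t,\ldk_t)$ is a martingale. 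I expect step (iii), the global solvability of the Riccati system, to be the only genuinely model-specific point; the remainder is a textbook dynamic-programming reduction and verification.
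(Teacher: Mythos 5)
The paper itself does not prove this statement---it is quoted from \cite{KimOmb96}---so your reconstruction is the relevant object to judge, and most of it is sound: the reduction via $w(t,x,\lambda)=\tfrac{x^p}{p}g(t,\lambda)$, the exponential--quadratic ansatz and coefficient matching (which do reproduce exactly the three ODEs, the constants $\alpha_1,\dots,\alpha_4$, and the feedback formula \eqref{KO_primal}), and the non-explosion argument for the Riccati component (monotone increase toward the unique positive root, using $q\in(-1,0)$ and $\alpha_2\ge 0$) are all correct; the latter is precisely the ``normal non-exploding solution'' the paper alludes to. Incidentally, your evaluation at $t=0$ produces $c(0)$, not $c(T)$, in the exponent of \eqref{equ:exact}; since $c(T)=0$, the printed $c(T)$ is evidently a typo and your version is the correct one.

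There is, however, a genuine gap in the verification step, in the suboptimality direction. You assert that for an arbitrary admissible $\pi$ the local martingale part satisfies $N_t=w(t,X_t,\ldk_t)-w_0+A_t\ge -w_0$, hence is a supermartingale. This cannot be right: $-w_0>0$ while $N_0=0$, so the inequality already fails at $t=0$; what $w\le 0$ actually gives is $N_t\le A_t-w_0$, which is useless. The underlying difficulty is structural: along an arbitrary strategy $w(t,X_t,\ldk_t)$ is a \emph{nonpositive} local supermartingale, and boundedness from above does not upgrade a local (super)martingale to a supermartingale (the negative of the inverse three-dimensional Bessel process is a negative local martingale with increasing expectation). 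Localization plus Fatou also goes the wrong way for nonpositive integrands, and the paper's admissibility class is only $\pi\in\sP_M^2$, so the inequality $\EE[U(X_T)]\le w(0,x,\ldk_0)$ does not follow from the pathwise HJB argument as written---this is exactly the direction that needs extra input when $p<0$. Your fallback (abstract existence/uniqueness of the optimizer plus martingality of $w(t,\hat X^{\text{KO}}_t,\ldk_t)$ along the candidate) does not close it either: it only yields $\EE[U(\hat X^{\text{KO}}_T)]=w(0,x,\ldk_0)\le u^{\text{KO}}(x)$ and does not identify the candidate with the abstract optimizer. The standard repair is dual: take the candidate orthogonal dual control $\hat\nu_t=\gamma\bigl(b(t)+c(t)\ldk_t\bigr)$ (whose $\beta=0$ specialization the paper records in Lemma \ref{lem:KO}), set $Y_T=y\,Z_T\,\mathcal{E}\bigl(-\int\hat\nu\,dW\bigr)_T$, compute $\EE[V(Y_T)]$ by the same exponential--quadratic calculation, and invoke the weak-duality bound $\EE[U(X_T)]\le\EE[V(Y_T)]+xy$ (valid since $Y$ is a supermartingale deflator) to obtain the matching upper bound; alternatively, verify the first-order condition $U'(\hat X^{\text{KO}}_T)\propto Z_T\hat H_T$ together with uniform integrability of the product $\hat X\hat Y$. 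Some argument of this kind must be added before \eqref{equ:exact} and the optimality of \eqref{KO_primal} are actually established.
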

For $p<0$, the above Riccati equation describing $c$ has the ``normal
non-exploding solution'' as defined in the appendix of \cite{KimOmb96}.
Therefore, all three functions $a,b$, and $c$ are bounded on any finite
time-interval $[0,T]$ of $(0,\infty)$.

To illustrate our approximation we think of the Kim-Omberg model as a
perturbation of a base model. As base model we will consider the following model with ``totally-unhedgable-coefficients'' (see Example 7.4, p.~305, in \cite{KarShr98}):
\begin{align}\label{KO1}
d\ld_t := \kappa (\theta-\ld_t) dt + \gamma \, dW_t,\quad 
\ld_0:=\lambda^\text{KO}_0. 
\end{align}
This way, $\ldk = \ld+\eps \ld'$, where $\eps = \beta$ and
\begin{align}\label{KO2}
d\ld'_t := - \kappa \ld'_t\, dt + dB_t,\quad \ld'_0:=0.
\end{align}
The following result provides closed-form expressions for our correction terms:
\begin{lemma}\label{lem:KO} Let $(\lambda,\lambda')$ be defined by \eqref{KO1}-\eqref{KO2} and let $p<0$. For the $\eps =0$ model the primal and dual optimizers are given by
\begin{align}\label{KO:base}
\hat{\pi}^{(0)}_t = \frac{\lambda_t}{1-p},\quad \hat{\nu}^{(0)}_t = \gamma \Big(b(t)+c(t)\lambda_t\Big),\quad t\in[0,T].
\end{align}
Furthermore, the processes $(\gamma^B,\gamma^W)$ appearing in the
martingale representation \eqref{Itorep} of $\Phi$  are given by
\begin{align}
\gamma_t^B &=\tfrac1{p-1}(C_2(t)+C_6(t)\lambda_t),\label{KOgammaB}\\
\gamma^W_t &= \tfrac{\gamma}{p-1}(C_4(t)+2C_5(t)
\lambda_t+C_6(t)\lambda_t'),\label{KOgammaW}
\end{align}
where the functions $C_1,C_2,C_4, C_5$ and $C_6$ in 
\eqref{KOgammaB}-\eqref{KOgammaW} satisfy the ODEs
\begin{align*}
-C_1'(t)&= \tilde{b}(t) \, C_4(t)+  \gamma^2 \, C_5(t),& C_1(T)&=0,\\
-C_2'(t)&= \tilde{b}(t) \, C_6(t) - \kappa \, C_2(t),& C_2(T)&=0,\\
-C_4'(t) &= q\, C_2(t) - \tilde{c}(t) \, C_4(t) + 2\tilde{b}(t) \, C_5(t),
& C_4(T)&=0,\\
-C_5'(t) &= q\, C_6(t) -2 \tilde{c}(t) \, C_5(t),& C_5(T)&=0,\\
-C'_6(t) &= - (\kappa +\tilde{c}(t)) \, C_6(t) -1,& C_6(T)&=0,
\end{align*}
on $[0,T)$, with
$(a,b,c)$ as in Theorem \ref{thm:KO} (with $\beta:=0$),
$\tilde{b}(t):=\kappa\theta-\gamma^2 b(t)$ and
$\tilde{c}(t) := \kappa+\gamma^2 c(t) $.
Furthermore, for the measure $\RRz$ defined by \eqref{equ:Girsanov} and for all $T>0$ we have
\begin{align}\label{Delta0_KO}
\Delta^{(0)} &:= \EE^{\RRz}\left[ \int_0^T \lambda'_s\hat{\pi}_s^{(0)} ds\right]= -\frac1{1-p}\Big(C_1(T) + C_4(T) \lambda_0 +C_5(T) \lambda_0^2\Big),
\end{align}
\end{lemma}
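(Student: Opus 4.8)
The plan is to derive, in order, the $\eps=0$ optimizers, the martingale-representation processes $(\gamma^B,\gamma^W)$, and finally $\Delta^{(0)}$; everything rests on Theorem~\ref{thm:KO} specialized to $\beta=0$ together with an explicit change of measure to $\PPz$. The primal formula $\hpiz_t=\lambda_t/(1-p)$ is read off \eqref{KO_primal} by setting $\beta=0$. For the dual optimizer I would use the (explicit) Kim--Omberg value process: with $\beta=0$ the process $R_t:=\tfrac1p(\hXz_t)^p e^{-a(t)-b(t)\lambda_t-\tfrac12 c(t)\lambda_t^2}$ is a $\PP$-martingale closing at $U(\hXz_T)$, so by \eqref{equ:Girsanov} one has $\rn{\PPz}{\PP}\big|_{\sF_t}=R_t/\uz$ and hence $\hYz_t=\tfrac1{p\uz}(\hXz_t)^{p-1}e^{-a(t)-b(t)\lambda_t-\tfrac12 c(t)\lambda_t^2}$. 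Since $M=B$, the wealth $\hXz=\EN(\int\hpiz\,d\Rz)$ has no $dW$-part, so the $dW$-part of $d\log\hYz_t$ comes only from $-\bigl(b(t)+c(t)\lambda_t\bigr)\,d\lambda_t$ and equals $-\gamma\bigl(b(t)+c(t)\lambda_t\bigr)\,dW_t$; matching this against $\hYz=\Zz\hHz$ with $\Zz=\EN(-\int\lambda\,dB)$ and $\hHz=\EN(-\int\hat{\nu}^{(0)}\,dW)$ (which is $M$-orthogonal because $\langle W,B\rangle=0$, and whose $dW$-part is $-\hat{\nu}^{(0)}_t\,dW_t$) gives $\hat{\nu}^{(0)}_t=\gamma\bigl(b(t)+c(t)\lambda_t\bigr)$.

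For the representation of $\Phi$, note that $\sigma\equiv1$ here, so $\Phi=\tfrac1{1-p}\int_0^T\lambda_t\lambda'_t\,dt$. Girsanov's theorem with the optimizers above gives $dB^{\PPz}=dB-q\lambda\,dt$ and $dW^{\PPz}=dW+\gamma(b+c\lambda)\,dt$, under which $(\lambda,\lambda')$ is the diffusion $d\lambda_t=(\tilde{b}(t)-\tilde{c}(t)\lambda_t)\,dt+\gamma\,dW^{\PPz}_t$, $d\lambda'_t=(q\lambda_t-\kappa\lambda'_t)\,dt+dB^{\PPz}_t$, with $\tilde{b},\tilde{c}$ exactly as in the statement and with independent driving noises. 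I would then set $g(t,x,y):=\EE^{\PPz}\bigl[\tfrac1{1-p}\int_t^T\lambda_s\lambda'_s\,ds\mid\lambda_t=x,\ \lambda'_t=y\bigr]$; the associated Feynman--Kac PDE has affine coefficients and the bilinear source $\tfrac1{1-p}xy$, so $g$ is a quadratic polynomial in $(x,y)$ with time-dependent coefficients and $g(T,\cdot)=0$. Substituting this ansatz and matching the monomials $1,x,y,x^2,xy,y^2$ produces a closed linear ODE system; the $y^2$-coefficient solves a homogeneous equation with zero terminal value, hence vanishes (which is why no $C_3$ appears), and relabeling the remaining five coefficients as $-C_1,-C_4,-C_2,-C_5,-C_6$ reproduces precisely the ODEs (with their terminal conditions) in the statement. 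It\^o's formula applied to $N_t:=\int_0^t\hpiz_s\lambda'_s\,ds+g(t,\lambda_t,\lambda'_t)$, whose drift is killed by the PDE, shows $N$ is the $\PPz$-martingale closing at $\Phi$; reading off the coefficients of $dB^{\PPz}$ and $dW^{\PPz}$ gives $\gamma^B_t=\partial_y g(t,\lambda_t,\lambda'_t)$ and $\gamma^W_t=\gamma\,\partial_x g(t,\lambda_t,\lambda'_t)$, which are \eqref{KOgammaB}--\eqref{KOgammaW}, while $\Delta^{(0)}=\EE^{\PPz}[\Phi]=N_0=g(0,\lambda_0,0)$ is \eqref{Delta0_KO} (the ODE solutions evaluated at the initial time). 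The integrability required --- $\Phi\in\el^2(\PPz)$ and square-integrability of $\gamma^B,\gamma^W$ --- follows from the boundedness of $a,b,c$ on $[0,T]$ (noted after Theorem~\ref{thm:KO}) together with the fact that $\lambda$ and $\lambda'$ have finite moments of every order under $\PPz$.

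The main obstacle is the bookkeeping in the middle step: getting the $\PPz$-drifts right, carrying the quadratic ansatz through the PDE, and verifying that the six monomial identities collapse to the five displayed ODEs under the (index-shifted, sign-flipped) relabeling of the coefficients of $g$ as $-C_1,-C_4,-C_2,-C_5,-C_6$. One should also pause to justify that this explicit $g$ really is the conditional expectation --- i.e.\ that the unique polynomially growing solution of the linear PDE admits the probabilistic representation --- which is where the boundedness of $\tilde{b},\tilde{c}$ and the integrability of the bilinear payoff enter.
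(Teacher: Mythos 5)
Your argument is essentially the paper's own proof: it likewise reads the optimizers off Theorem \ref{thm:KO} with $\beta=0$, identifies the Girsanov drifts $dB^{\PPz}=dB-q\lambda\,dt$ and $dW^{\PPz}=dW+\gamma\bigl(b(t)+c(t)\lambda_t\bigr)\,dt$, and shows that $\int_0^t\lambda_s\lambda'_s\,ds$ minus the quadratic polynomial with coefficients $C_1,C_2,C_4,C_5,C_6$ is a true $\PPz$-martingale (boundedness of the $C_i$ plus the Ornstein--Uhlenbeck property of $(\lambda,\lambda')$ under $\PPz$), which is precisely your Feynman--Kac/monomial-matching computation presented in verification form, with the same square-integrability justification. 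Only tighten the bookkeeping you flagged: the displayed ODEs correspond to the source $xy$ (not $xy/(1-p)$), so the coefficients of your $g$ are $-C_i/(1-p)$ rather than $-C_i$; with that normalization $\partial_y g$, $\gamma\,\partial_x g$ and $g(0,\lambda_0,0)$ reproduce \eqref{KOgammaB}, \eqref{KOgammaW} and \eqref{Delta0_KO}, the $C_i$ being evaluated at time $0$ exactly as in your last step and in the paper's subsequent exact computations.
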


\proof The first part follows from Theorem \ref{thm:KO} applied to the case
$\beta:=0$. To find the  martingale representation \eqref{Itorep} we define the function 
$$
f(t,x,\lambda):= \frac{x^p}{p}e^{-a(t) - b(t)\lambda - \tfrac12 c(t)\lambda^2},\quad t\in[0,T],\quad x>0,\quad \lambda\in\R,
$$
where the functions $(a,b,c)$ are as in Theorem \ref{thm:KO}. The martingale properties of
$f(t,\hat{X}^{(0)}_t,\lambda_t)$ and $\hat{X}^{(0)}_t\hat{Y}^{(0)}_t$ as well as the proportionality property $(\hat{X}^{(0)}_T)^{p}\propto\hat{X}^{(0)}_T \hat{Y}^{(0)}_T$ produce 
$$
pf(t, \hat{X}^{(0)}_t,\lambda_t)=p\EE[f(T, \hat{X}^{(0)}_T,\lambda_T)|\sF_t]\propto\EE[ \hat{Y}^{(0)}_T\hat{X}^{(0)}_T|\sF_t] = \hat{X}^{(0)}_t\hat{Y}^{(0)}_t.
$$
By computing the dynamics of the left-hand-side we see from Girsanov's Theorem that the two processes
\begin{align*}
dB^{\RRz}_t &:= 
- q\lambda_tdt + dB_t,\\
\quad dW^{\RRz}_t &:= 
\Big(b(t)+c(t)\lambda_t\Big)\gamma dt + dW_t,
\end{align*}
are independent Brownian motions under $\RRz$. These dynamics and It\^o's Lemma ensure that
\begin{align*}
N_t := \int_0^t &\lambda'_s\lambda_s ds -C_1(t) -C_2(t)\lambda'_t
- C_4(t) \lambda_t -C_5(t) \lambda_t^2-C_6(t)\lambda_t\lambda'_t,
\end{align*}
is a $\RRz$-local martingale.

Because the processes $(\lambda,\lambda')$ remain Ornstein-Uhlenbeck
processes under $\RRz$ and the functions $C_1$-$C_6$ are bounded, $N$ is indeed a
$\RRz$-martingale. Furthermore, thanks to the zero terminal conditions
imposed on $C_1$-$C_6$, we see that
\begin{align*}
\Phi=\tfrac1{1-p}\int_0^T \ld_t \ld'_tdt &= \tfrac1{1-p}N_T = \tfrac1{1-p}N_0 +\int_0^T \gamma^B_t dB^{\PPz}_t +\int_0^T \gamma^W_t dW^{\PPz}_t,
\end{align*}
for $(\gamma^B,\gamma^W)$ defined by \eqref{KOgammaB}-\eqref{KOgammaW}. 
\endproof

\subsubsection{Exact computations.} The proof of Lemma \ref{lem:KO}  shows that 
$$
\Delta^{(0)} :=\EE^{\RRz}\left[ \int_0^T \lambda'_s\hat{\pi}_s^{(0)}
ds\right]= \frac1{p-1}\Big(C_1(0) + C_4(0) \lambda_0 +C_5(0)
\lambda_0^2\Big).
$$
This relation, a similar one (whose exact form and the derivation we omit) 
for the second-order term $\Delta^{(00)}$ of \eqref{Delta00}, and the availability of the exact expression \eqref{equ:exact} for the value function $u^{KO}$ 
allow for an efficient numerical computation of the zeroth-, first-, and second-order
approximation, and their comparison with the exact values.
The model parameters used in the below Table 1 are the calibrated model parameters for the market portfolio reported in Section 4.2 in \cite{LarMun2012} (we ignore
the constant interest rate and constant volatility used in Section 4 in
\cite{LarMun2012}). Moreover, we use negative values of $\eps$ because
the empirical covariation between excess return and the
stock's return is typically negative (see, e.g., the discussion in Section 4.2 in
\cite{LarMun2012}).

Instead of hard-to-interpret expected utility values, we report their certainty
equivalents (i.e., their compositions with the
function $CE:=U^{-1}$; see Remark \ref{rem:761C}(1)).
We set $\delta^{(0)} := p u^{(0)}  \Delta^{(0)}$ and
$\delta^{(00)} :=p u^{(0)} \big( \Delta^{(00)} + p (\Delta^{(0)})^2\big)$. 

\begin{center}
\begin{tabular}{cc||ccc|c}
 $\eps$ & 
 $\lambda_0$& 
 $CE(u^{(0)}) $&
 $CE(u^{(0)}+ \eps \delta^{(0)})$&
 $CE(u^{(0)}+ \eps \delta^{(0)} + \tfrac{\eps^2}{2} \delta^{(00)})$&
 $CE(u^{(\eps)})$\\
\hline\hline
-0.01 & 0.1 & 1.046 & 1.047 & 1.048 & 1.048 \\
- 0.05 & 0.1 & 1.046 & 1.054 & 1.081 & 1.084 \\
- 0.10 & 0.1 & 1.046 & 1.063 & 1.181 & 1.206 \\\hline
- 0.01 & 0.5 & 1.614 & 1.647 & 1.648 & 1.649 \\
- 0.05 & 0.5 & 1.614 & 1.794 & 1.850 & 1.846 \\
- 0.10 & 0.5 & 1.614 & 2.020 & 2.339 & 2.272 \\
\hline
\end{tabular}
\footnotesize 
\begin{quotation}
 \textbf{Table 1.} Certainty equivalents for the zeroth-, first-, and second-order approximations and the exact values in the Kim-Omberg model with
$\beta:=\eps$ and unit initial wealth. The model parameters used are 
$\gamma :=0.04395,\,\kappa := 0.0404,\,\theta := 0.117,\,p:=-1$, and $T:=10$.
\end{quotation}
\end{center}


\subsubsection{Monte-Carlo-based computations}
One of the advantages of our approach is that it lends
itself easily to computational methods based on Monte-Carlo (MC) simulation. For the Kim-Omberg model we use the standard explicit Euler scheme from MC simulation to compute the involved quantities of interest. In other words, we do not rely on the availability of exact expressions for the value
functions or the correction terms $\Delta^{(0)}$ and $\Delta^{(00)}$.

For a portfolio $\pi$ and the
model-perturbation parameter $\eps$, the  constant
$\text{CE}^{(\eps)}(\pi)\in(0,\infty)$ is uniquely defined by
\begin{align}\label{approxCE}
U\left(\text{CE}^{(\eps)}(\pi)\right) = \EE\left[U\left(\EN\big(\int_0^T
\pi_t\, d\Re_t\big)\right)\right].
\end{align}
In other words, $\text{CE}^{(\eps)}{(\pi)}$ is the dollar amount whose
utility value matches that of the expected utility an investor would obtain
in the $\eps$-model who uses the strategy $\pi$. We remind the reader that
$\hat{\pi}^{(0)}$ denotes the optimizer in the base ($\eps=0$)
model, $\tilde{\pi}^{(\eps)}$ is the second-order improvement (as in
\ref{optimal_control_correction} above)
of $\hat{\pi}^{(0)}$,
and $\hat{\pi}^{(\eps)}$ is the exact optimizer in the $\eps$-model.
Both quantities 
$\text{CE}^{(\eps)}(\hat{\pi}^{(0)})$ and 
$\text{CE}^{(\eps)}(\tilde{\pi}^{(\eps)})$ serve as lower bounds for the exact
value $\text{CE}(u^{(\eps)})$. The second one, which we also denote by
\begin{align}\label{def_LB}
 \text{LB} := \text{CE}^{(\eps)}(\tilde{\pi}^{(\eps)}),
 \end{align}
is second-order optimal and appears in our simulations. 
To obtain a corresponding upper bound, we simulate the dynamics of the dual
process, based on $\eqref{equ:conj2}$ and the
second-order optimal dual control $\tilde{\nu}$ defined by
\eqref{optimal_dual_control_correction}. We define
\begin{align}\label{def_UB}
\text{UB} := U^{-1} \left( \frac1p\EE\left[\Big(Z_T^{(\eps)}
\mathcal{E}(-\int_0^T \tilde{\nu}_u dW_u)\Big)^{-q}\right]^{1-p} \right).
\end{align}

To quantify the simulation errors, we report the $95\%$-confidence intervals based on MC simulated values of  $\text{CE}^{(\eps)}(\hat{\pi}^{(0)})$, LB, and UB in the below Table 2. The value $\text{CE}(u^{(\eps)})$,
computed without MC simulation and included for comparison only, 
is exact to $3$ decimal places. 
\begin{center}
\begin{tabular}{cc||ccc|c}
 $\eps$ & 
 $\lambda_0$& 
 $\text{CE}^{(\eps)}(\hat{\pi}^{(0)}) $ &
 $\text{LB}$ &
 $\text{UB}$ &
 $\text{CE}(u^{(\eps)})$\\
\hline\hline
-0.01& 	0.10& $[1.047, 1.048]$& [1.048, 1.049]& [1.048, 1.049]& 1.048\\  
-0.05& 	0.10& $[1.052, 1.053]$& [1.083, 1.084]& [1.083, 1.085]& 1.084\\  
-0.10& 	0.10& $[1.057, 1.058]$& [1.200, 1.201]& [1.204, 1.208]& 1.206\\  
\hline
-0.01& 	0.50& $[1.644, 1.649]$& [1.647, 1.653]& [1.646, 1.657]& 1.649\\  
-0.05& 	0.50& $[1.760, 1.764]$& [1.844, 1.850]& [1.843, 1.857]& 1.846\\ 
-0.10& 	0.50& $[1.868, 1.871]$& [2.248, 2.256]& [2.266, 2.286]& 2.272\\  
\hline
\end{tabular}
\begin{quotation}
\footnotesize \textbf{Table 2.} 
$95\%$-confidence intervals for certainty equivalents for the upper and lower bounds as well as the base model optimizer $\hat{\pi}^{(0)}$ for the Kim-Omberg model. The true exact values for the $\varepsilon$-model are included in the last column for comparison. Except for the last column, the numbers are based on MC simulation using Euler's scheme with one million paths each with time-step size $0.001$. The model parameters are the same as in Table 1.
\end{quotation}
\end{center}

In Table 2 we note the significant difference 
between the performance of the base-model optimizer
$\hat{\pi}^{(0)}$ and its second-order improvement
$\tilde{\pi}^{(\eps)}$; especially for larger values of $\eps$. Furthermore, the lower and upper bounds appear to be quite tight.

\subsection{Extended affine models} 
\label{sse:extaff}
We turn to a class of models for which no closed-form expressions for the
value functions $u$ and $v$ seem to be available. It constitutes the
main example of the class of so-called extended-affine specifications of the
market price of risk models introduced by \cite{CheFilKim07}. 

As in the
Kim-Omberg model above we let the augmented filtration be generated by
two independent Brownian motions $B$ and $W$. The
central role is played by the following Feller process $F$
\begin{align}
dF_t&:= \kappa (\theta - F_t)dt +  \sqrt{F_t}\, \big( \beta dB_t + \gamma dW_t\big), \quad F_0>0,\label{Feller}
\end{align}
where $\kappa, \theta, \beta$ and $\gamma$ are strictly 
positive constants such that the (strict) Feller condition
$2\kappa\theta > \beta^2 +\gamma^2$ holds. This ensures, in particular,
that $F$ is strictly positive on $[0,T]$, almost surely. 
Unlike in the Kim-Omberg model, the appropriate volatility normalization turns out to be 
$\sqrt{F_t}$; that is, we define 
 \begin{equation}
 \label{equ:M-F}
 \begin{split}
   M := \int_0^{\cdot} \sqrt{F_t}\, dB_t.
 \end{split}
 \end{equation}
A particular extended affine specification of the market price of risk process considered in \cite{CheFilKim07}  is given by
\begin{align}\label{CFK}
\lambda^\text{CFK}_t := \frac{\eps}{F_t}+1,
\end{align}
where $\eps$ is a (positive or negative) constant. Unless $\eps=0$,
there is 
currently no known closed-form 
solution to the corresponding optimal investment problem (Theorem 4.5 in \cite{GR15} expresses the corresponding value function as an infinite sum of weighted generalized Laguerre polynomials). However, for
$\eps =0$, the resulting model is covered by the analysis in
\cite{Kra05}. Therefore, we choose the constant 
market price of risk process
\[ \lambda_t:= 1\] for the base model whereas we define the 
perturbation process $\lambda'$ by
\begin{align}\label{CV}
\lambda'_t := \frac1{F_t}.
\end{align}
\begin{theorem}[Kraft 2005]\label{thm:CV} For $p<0$ there exist continuously differentiable functions $a,b:[0,T)\to \R$ such
that 
\begin{align*}
- a'(t)&= \alpha_1\, b(t),& a(T)&=0, \\
-b'(t) &= \alpha_4 \, b(t) - \tot \alpha_2
\, b^2(t) - \tot q,& b(T)&=0,
\end{align*}
where $\alpha_1:=\theta\kappa$, $\alpha_2 := (1+q) \beta^2+ \gamma^2$,
 and $\alpha_4 := q \beta - \kappa$.  The value function of the utility-maximization problem with $\ld:= 1$ and $M$ as in \eqref{equ:M-F} is given by 
$$
u^{(0)}(x) = \frac{x^p}{p}e^{-a(0) - b(0) F_0},\quad x>0.
$$
The corresponding primal and dual optimizers are given by
\begin{align}\label{CV_primal}
\hat{\pi}^{(0)}_t = \frac{b(t)\beta-1}{p-1},\quad \hat{\nu}^{(0)}_t = b(t)\gamma\sqrt{F_t},\quad  t\in[0,T].
\end{align}
\end{theorem}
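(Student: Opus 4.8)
\emph{Approach and the Riccati system.} The statement is a verification theorem, so the plan is to guess the value function, read off the ODEs for $a$ and $b$ from the Hamilton--Jacobi--Bellman (HJB) equation, and then confirm optimality of the proposed primal and dual controls by the same It\^o/Girsanov computation used in the proof of Lemma \ref{lem:KO}. For the base model $\ld\equiv 1$ and $d\ab{M}_t = F_t\, dt$, so $\Rz = M + \int_0^{\cdot} F_t\, dt$, $d\Rz_t = \sqrt{F_t}\, dB_t + F_t\, dt$, and an admissible wealth solves $dX_t = X_t\pi_t(\sqrt{F_t}\, dB_t + F_t\, dt)$; the pair $(X,F)$ is Markov. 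Positing $V(t,x,F) = \tfrac{x^p}{p}e^{-a(t)-b(t)F}$ with $a(T)=b(T)=0$, the HJB equation reads
\begin{equation*}
0 = V_t + \kappa(\theta-F)V_F + \tot(\beta^2+\gamma^2)F V_{FF} + \sup_{\pi}\Big(x\pi F V_x + x\pi\beta F V_{xF} + \tot x^2\pi^2 F V_{xx}\Big).
\end{equation*}
The first-order condition gives $\pi^\ast = -\tfrac{V_x + \beta V_{xF}}{x V_{xx}}$, which for the posited $V$ equals $\tfrac{b(t)\beta - 1}{p-1}$; substituting the optimized Hamiltonian $-\tfrac{(V_x + \beta V_{xF})^2}{2 V_{xx}}F$ back in, dividing by $\tfrac{x^p}{p}e^{-a-bF}$, using $\tfrac{p}{1-p} = q$, and matching the $F^0$- and $F^1$-coefficients produces exactly $-a' = \alpha_1 b$ and $-b' = \alpha_4 b - \tot\alpha_2 b^2 - \tot q$ with the stated $\alpha_i$. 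Since $p<0$ forces $q<0$ and $\alpha_2 = (1+q)\beta^2 + \gamma^2 > 0$, the quadratic $\tot\alpha_2 b^2 - \alpha_4 b + \tot q$ has two real roots of opposite sign, so the backward solution from $b(T)=0$ stays in the bounded interval between $0$ and the positive root; hence $b$ is bounded and nonnegative on $[0,T]$ and $a = \int_{\cdot}^{T}\alpha_1 b$ is bounded as well (Kraft's ``normal non-exploding'' regime, as in the Kim--Omberg discussion).

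\emph{Verification, primal side.} For an arbitrary admissible $X = \EN(\int\pi\, d\Rz)$, It\^o's formula and the HJB inequality show that $\tfrac{(X_t)^p}{p}e^{-a(t)-b(t)F_t}$ is a local supermartingale, and a local martingale when $\pi\equiv\hpiz := \tfrac{b\beta-1}{p-1}$. The point is to upgrade these to a genuine supermartingale along all admissible $X$ and to a uniformly integrable martingale along $\hXz := \EN(\int\hpiz\, d\Rz)$; I would do this using the boundedness of $a$, $b$ ($\geq 0$), and $\hpiz$ together with the strict Feller condition $2\kappa\theta > \beta^2 + \gamma^2$, which controls the negative moments of $F_T$ and the exponential moments of $\int_0^T F_s\, ds$ entering $\EE[(X_t)^p]$ (equivalently, one changes measure to $\PPz$, under which $F$ remains a CIR process with shifted parameters, and invokes the explicit affine Laplace transform). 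This gives $\uz(x) = V(0,x,F_0) = \tfrac{x^p}{p}e^{-a(0)-b(0)F_0}$ and identifies $\hXz$ as the primal optimizer.

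\emph{Verification, dual side.} Since $\ld\equiv 1$, the minimal density is $\Zz = \EN(-\int_0^{\cdot}\sqrt{F_t}\, dB_t)$, and $\hYz = \Zz\hHz$ for some $\hHz\in\sH$, which in the present Brownian filtration means $\hHz = \EN(-\int\hat{\nu}^{(0)}\, dW)$ with $\hat{\nu}^{(0)}\in\sP_{W}^2$. Because $\hXz\hYz$ is a uniformly integrable martingale that at $t=T$ equals $\tfrac{1}{\uz}U(\hXz_T) = \tfrac{1}{\uz}\tfrac{(\hXz_T)^p}{p}e^{-a(T)-b(T)F_T}$, while the dynamic value $\tfrac{1}{\uz}V(t,\hXz_t,F_t)$ is also a uniformly integrable martingale with the same terminal value, the two coincide for all $t$; hence $\hYz_t = \tfrac{(\hXz_t)^{p-1}}{p\uz}e^{-a(t)-b(t)F_t}$. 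Applying It\^o's formula to this expression and using the ODEs for $a,b$ together with $(p-1)\hpiz = b\beta - 1$, the $dB$-part collapses to $-\sqrt{F_t}\, dB_t$ (consistent with $\Zz$) and the $dW$-part is $-b(t)\gamma\sqrt{F_t}\, dW_t$; therefore $\hat{\nu}^{(0)}_t = b(t)\gamma\sqrt{F_t}$, and $\hHz = \EN(-\int b\gamma\sqrt{F}\, dW)$ is $M$-orthogonal since $W\perp B$. Combined with the formula for $\uz$ and the conjugacy relation \eqref{equ:conj2}, this also pins down $\vz$.

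\emph{Main obstacle.} The HJB/coefficient-matching and the dual-dynamics steps are routine It\^o calculus; the real work is the integrability in the primal verification. Because $p<0$, $(\hXz_T)^p$ is large precisely where $\hXz_T$ is small, and turning the local (super)martingales into genuine ones rests on finiteness of the exponential moments of $\int_0^T F_s\, ds$ under $\PP$ and $\PPz$ (equivalently, on staying inside the domain of the relevant CIR Laplace transform), as well as on the non-explosion of the Riccati equation for $b$ --- precisely the places where the strict Feller condition and the sign of $q$ enter.
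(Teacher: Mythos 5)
The paper never proves Theorem \ref{thm:CV}: it is quoted from Kraft (2005) (\cite{Kra05}), so your proposal has to stand on its own as a verification argument rather than be measured against an internal proof. Your computational steps are correct as far as they go: with $d\ab{M}_t=F_t\,dt$ the first-order condition in the HJB equation gives $\hpiz=(b\beta-1)/(p-1)$; matching the $F^0$- and $F^1$-coefficients yields exactly the stated ODEs with $\alpha_1,\alpha_2,\alpha_4$; since $q<0$ and $\alpha_2>0$ the Riccati quadratic has roots of opposite sign, so the backward solution from $b(T)=0$ is indeed nonnegative and bounded; and the It\^o computation identifying the $dW$-integrand of $\hYz$ as $-b(t)\gamma\sqrt{F_t}$, hence $\hat{\nu}^{(0)}_t=b(t)\gamma\sqrt{F_t}$, is right.

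The genuine gap is in the primal verification, precisely at the step you flag as the ``main obstacle'' but then propose to close with moment bounds on $F$. For $p<0$ the process $G^\pi_t:=U(X_t)e^{-a(t)-b(t)F_t}$ is negative, and a negative local supermartingale need not satisfy $\EE[G^\pi_T]\le G^\pi_0$: localization plus Fatou gives the inequality in the wrong direction (take $G=-N$ with $N$ an inverse Bessel(3) process to see a negative local martingale whose expectation strictly increases). Moreover the inequality must hold for \emph{every} admissible $\pi$, and boundedness of $a$, $b$, $\hpiz$ together with exponential or negative moments of $F$ cannot control $X_T^p$ for an arbitrary admissible strategy, so the local supermartingale cannot be ``upgraded'' strategy by strategy. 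The standard repair, and what a complete proof must do, is to bypass the supermartingale claim via duality: take the candidate deflator $Y^*:=\Zz\,\EN\big(-\int b\gamma\sqrt{F}\,dW\big)\in\sYz(1)$, use Fenchel's inequality $U(X_T)\le V(yY^*_T)+yX_TY^*_T$ together with the supermartingale property of $XY^*$ (which holds by the definition of the dual domain) to obtain the upper bound for all admissible $X$, compute $\EE[(Y^*_T)^{-q}]$ explicitly via the same Riccati system --- this is where the strict Feller condition and the affine Laplace transform genuinely enter, but now only for the candidate quantities --- and finally check equality for $\hXz=\EN(\int\hpiz\,d\Rz)$ and the matching $y$. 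This route also removes a circularity in your dual paragraph, which invokes the uniformly integrable martingale property of $\hXz\hYz$ and the relation $U'(\hXz_T)=\hYz_T\,y$; those facts presuppose that $\hXz$ and $\hYz$ are already known to be the optimizers, which is exactly what is being verified.
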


To check the conditions of our main
theorems, we use the explicit expression in \cite{HurKuz2008}, Theorem
3.1, for the Laplace transform 
\[ L(a_1, a_2) := \EE[ \exp( a_1 Q + a_2 \Ld)],\quad Q:=\int_0^T F_s\, ds,\quad \Ld:=\int_0^T \tfrac{1}{F_s}\, ds.\]
It is shown in \cite{HurKuz2008} that $L$ is finite in some neighborhood
of $0$ under the strict Feller condition $2\kappa\theta > \beta^2
+\gamma^2$. This implies that both $\Ld$ and $Q$ have a finite exponential
moment. In particular, H\"older's inequality with exponents $-1/q$ and $(1-p)$ implies that
$$
\EE^{\PPz}[\Lambda]=\frac1{qv^{(0)}}\EE[(\hat{Y}_T^{(0)})^{-q}\Lambda]\le \frac1{qv^{(0)}}\EE[\Lambda^{1-p}]^{\frac{1}{1-p}}<\infty.
$$ Thanks to the deterministic behavior of $\hat{\pi}^{(0)}$ in
\eqref{CV_primal}, the  martingale representation \eqref{Itorep} of $\Phi$  holds
with $\gamma^B = \gamma^W=0$. Consequently, we have 
$$
\Phi :=  \int_0^T  \hpiz_s\, ds =\Delta^{(0)},\quad 
\Delta^{(00)} :=\tfrac{1 }{1-p} \EE^{\PPz}[\Lambda].
$$
To verify that \eqref{exp_reg} holds, we can use H\"older's inequality (twice) with exponents $-1/q$ and $(1-p)$ to see
\begin{align*}
\EE^{\PPz}\left[e^{-\frac12
\eps^2 \frac{p}{(1-p)^2}\Lambda + q\eps \int_0^T \frac1{\sqrt{F_t}} dB^{\PPz}_t}\right]&\le
\EE^{\PPz}\left[e^{-\frac12
\eps^2 (p+q)\Lambda}\right]^{\frac1{1-p}}\\
&\le
\frac1{qv^{(0)}}\EE\left[e^{-\frac12
\eps^2 (1-p)(p+q)\Lambda}\right]^{\frac1{(1-p)^2}},
\end{align*}
which is finite for $\eps>0$ small enough. This allows Theorem
\ref{thm:main3} to be invoked for $\eps>0$ small enough.
The second-order optimal controls $(\tilde{\pi},\tilde{\nu})$  are then well defined
by \eqref{optimal_control_correction} and
\eqref{optimal_dual_control_correction}, and read
\begin{align}
\tilde{\pi} := \hat{\pi}^{(0)} + \eps \tfrac{\lambda'}{1-p},\quad
\tilde{\nu} := \hat{\nu}^{(0)}.
\end{align}

Table 3 is the analogue of Table 2 for the extended affine model with parameters taken from Figure 4 in Section 3.3 in \cite{LarMun2012}.  The methodology and the simulated quantities are the same as for Table 2.
\newpage
\begin{center}
  \begin{tabular}{cc||ccc}
$\eps$ & $F_0$ & $\text{CE}^{(\eps)}(\hat{\pi}^{(0)})$  & LB &
 UB \\ \hline\hline
$0.10$ & $0.01$ & $[1.724, 1.726]$& $[10.159, 10.399]$&$[10.226, 10.481]$  \\
$0.05$& $0.01$ &  $[1.342, 1.343]$& $[2.141, 2.151]$&$[2.131, 2.149]$  \\
$0.01$& $0.01$& $[1.097, 1.098]$& $[1.118, 1.119]$&$[1.117, 1.120]$  \\
\hline
$0.10$& $0.05$& $[1.728, 1.729]$& $[9.660, 9.877]$&$[9.766, 10.000]$  \\
$0.05$& $0.05$& $[1.344, 1.345]$& $[2.105, 2.115]$&$[2.102, 2.120 ]$  \\
$0.01$& $0.05$& $[1.099, 1.100]$& $[1.119, 1.121]$&$[1.117, 1.121]$  \\
\hline
\end{tabular}
\begin{quotation}
\footnotesize \textbf{Table 3.} $95\%$-confidence intervals for certainty equivalents for the upper and lower bounds as well as the base model optimizer $\hat{\pi}^{(0)}$ for the extended affine model. The parameter values are 
$\kappa := 5$, $\theta := 0.0169$, $\beta := -0.1$, $\gamma :=0.1744$, $p:=-1$,
and $T:=10$. The numbers are based on MC simulation using Euler's scheme with one million paths each with time-step size $0.001$.
\end{quotation}

\end{center}

The zeroth order approximation $\text{CE}^{(0)}(\hat{\pi}^{(0)})$ produces the certainty equivalent values 
$$
\text{CE}^{(0)}(\hat{\pi}^{(0)}) = 1.043 \; (F_0=0.01),\quad\text{and }\quad   \text{CE}^{(0)}(\hat{\pi}^{(0)}) = 1.045 \;(F_0=0.05).
$$ 
Perhaps even more than in the Kim-Omberg model, the numbers in Table 3 above illustrate the superiority of the second-order approximations (columns 4 and 5) over its first-order version (column 3) as well as the zeroth order values reported above. Again, the bounds in Table 3 appear quite tight when compared to the first-order approximations for moderate values of $\varepsilon$. 

\bibliographystyle{amsalpha}
\bibliography{LarMosZit14}
\end{document}